\DeclarePairedDelimiter\ceil{\lceil}{\rceil}
\DeclarePairedDelimiter\floor{\lfloor}{\rfloor}
\title{A Simple Algorithm for Higher-order Delaunay
     Mosaics and Alpha Shapes}
\titlerunning{A Simple Algorithm for Higher-order Delaunay
     Mosaics and Alpha Shapes}
\author{Herbert Edelsbrunner}{IST Austria (Institute of Science and Technology Austria),
  Am Campus 1, \\ 3400 Klosterneuburg, Austria}{edels@ist.ac.at}{https://orcid.org/0000-0002-9823-6833}{}
\author{Georg Osang}{IST Austria (Institute of Science and Technology Austria),
  Am Campus 1, \\ 3400 Klosterneuburg, Austria}{georg.osang@ist.ac.at}{https://orcid.org/0000-0002-8882-5116}{}
\authorrunning{H. Edelsbrunner and G. Osang}
\keywords{Delaunay mosaics, Voronoi tessellations, algorithms,
  software, computational experiments.}
\newcommand{\mm}[1] {\ifmmode{#1}\else{\mbox{\(#1\)}}\fi}
\newcommand{\scalprod}[2] {{\langle #1 , #2 \rangle}}
\newcommand{\ignore}[1]{}
\newcommand{\ourproof}{\begin{proof}}
\newcommand{\eop}{\end{proof}}  %
\def \Xp{{A}}                     
\newcommand{\xp}           {{a}}  
\newcommand{\rp}           {{x}}  
\newcommand{\rpcd}         {{z}}  
\newcommand{\rpd}          {{y}}  
\newcommand{\vq}           {\mm{{v}}} 
\newcommand{\Rspace}       {\mm{{\mathbb R}}}
\newcommand{\Arr}[1]       {\mm{{\rm Arr}{({#1})}}}
\newcommand{\Rhomboid}[1]  {\mm{{\rm Rho}{({#1})}}}
\newcommand{\Del}[2]       {\mm{{\rm Del}_{#1}{({#2})}}}
\newcommand{\Vor}[2]       {\mm{{\rm Vor}_{#1}{({#2})}}}
\newcommand{\domain}[1]    {\mm{{\rm dom}{({#1})}}}
\newcommand{\Rfun}         {\mm{{\cal R}}}
\newcommand{\Ain}          {\mm{{\Xp}_{\it in}}}
\newcommand{\Aon}          {\mm{{\Xp}_{\it on}}}
\newcommand{\Aout}         {\mm{{\Xp}_{\it out}}}
\newcommand{\xin}[1]       {\mm{{\Xp}_{\it in}{({#1})}}}
\newcommand{\xon}[1]       {\mm{{\Xp}_{\it on}{({#1})}}}
\renewcommand{\xout}[1]      {\mm{{\Xp}_{\it out}{({#1})}}}
\newcommand{\vx}[1]        {\mm{{V}{({#1})}}}
\newcommand{\Smin}[1]      {\mm{{S}_{\it min}{({#1})}}}
\newcommand{\Paraboloid}   {\mm{{\mathcal P}}}  
\newcommand{\lift}[1]      {\mm{{\rm lift}{({#1})}}}  
\newcommand{\Paraboloidt}[1] {\mm{{\mathcal P}_{#1}}}  
\newcommand{\parafunct}[1] {\mm{{\mathcal P}_{#1}}}  
\renewcommand{\rho}        {\mm{\varrho}}
\newcommand{\norm}[1]      {\mm{\|{#1}\|}}
\newcommand{\card}[1]      {\mm{\#{#1}}}
\newcommand{\Edist}[2]     {\mm{\|{#1}-{#2}\|}}
\newcommand{\ourparagraph}[1] {\vspace{0.1in} \noindent \textbf{#1}}
\newcommand{\Skip}[1]      {}
\newcommand{\Shorten}[1]   {}
\begin{document}
\maketitle

\begin{abstract}
\nolinenumbers
  We present a simple algorithm for computing higher-order Delaunay mosaics
  that works in Euclidean spaces of any finite dimensions.
  The algorithm selects the vertices of the order-$k$ mosaic from incrementally constructed
  lower-order mosaics and uses an algorithm for weighted first-order Delaunay mosaics
  as a black-box to construct the order-$k$ mosaic from its vertices.
  Beyond this black-box, the algorithm uses only combinatorial operations,
  thus facilitating easy implementation.
  We extend this algorithm to compute higher-order $\alpha$-shapes
  and provide open-source implementations. 
  We present experimental results for properties of higher-order Delaunay mosaics
  of random point sets.
\end{abstract}
\nolinenumbers

\newpage
\setcounter{page}{1}
\section{Introduction}
\label{sec:intro}

Order-$k$ Voronoi tessellations are a generalization of
ordinary Voronoi tessellations.
Instead of each domain corresponding to a single point in the input,
$A \subseteq \Rspace^d$, each order-$k$ domain corresponds to a subset,
$Q \subseteq A$, of size $k$, and consists of the set of points
$x \in \Rspace^d$ for whom the points in $Q$ are the closest $k$ points within $A$.
Its dual is the order-$k$ Delaunay mosaic.
We will formally define both in Section \ref{sec:definitions}.
Order-$k$ Voronoi tessellations were introduced by \cite{ShHo75} as a data
structure for fast $k$ closest point queries,
namely in time $O(k + \log n)$ with $n = \card{A}$.
A less direct application is the computation of the distance-to-measure
introduced in \cite{chazal2011geometric} and related
to $k$ closest point search in \cite{GMM13}.
Furthermore, certain subcomplexes of the order-$k$ Delaunay mosaic realize
the order-$k$ $\alpha$-shapes introduced in \cite{KrPo14}.
Order-$k$ $\alpha$-shapes are a generalization of $\alpha$-shapes \cite{EKS83} used
to approximate the shape of a point set. Unlike ordinary $\alpha$-shapes and depending
on the parameter $k$, they exhibit robustness to noisy data points.

In the plane, the number of domains in the order-$k$ Voronoi tessellation
or, equivalently, the number of vertices in the order-$k$ Delaunay mosaic is
$\Theta(k(n - k))$; see \cite{Lee82,ShHo75}.
For dimensions $d \geq 3$, this number can vary significantly depending on the
way the input points are distributed.
The upper bound of $O(k^{\ceil{\frac{d+1}{2}}} n^{\floor{\frac{d+1}{2}}})$
on the total size of the first $k$ higher-order Delaunay mosaics
\cite{clarkson1989applications} is tight,
while the lower bound of $\Omega(k^d n)$ \cite{Mul90} is only conjectured.
For individual order-$k$ Delaunay mosaics, the complexity is poorly understood.
The problem is closely related to the $(d+1)$-dimensional $k$-set problem.
Specifically, the points in $A \subseteq \Rspace^d$ can be mapped to
equally many points in $\Rspace^{d+1}$ such that the order-$k$ domains
in $\Rspace^{d}$ correspond to $k$-sets in $\Rspace^{d+1}$, 
see e.g.\ \cite{clarkson1989applications}.

The first algorithm to compute order-$k$ Voronoi tessellations and Delaunay mosaics
in the plane was described by Lee in \cite{Lee82}.
The algorithm computes the Voronoi tessellations one by one,
in increasing order and in time $O(k^2 n \log n)$.
Mulmuley \cite{Mul90} extended this algorithm beyond two dimensions,
computing the first $k$ levels in a special $(d+1)$-dimensional hyperplane arrangement,
which implicitly yield the order-$k$ Voronoi tessellations and Delaunay mosaics
in time $O(s \log n + k^d n^2)$, in which $s$ denotes the output size.
Mulmuley \cite{mulmuley1991levels} later described another algorithm,
which instead adds hyperplanes one by one,
and runs in time $O(k^{\ceil{\frac{d+1}{2}}} n^{\floor{\frac{d+1}{2}}})$
for $d \geq 3$, which equals the worst-case output size.
For $d=2$, the expected runtime is $O(k^2 n \log \frac{n}{k})$.
Another incremental algorithm with similar complexity for $d \geq 3$ has
been described by Agarwal \emph{et al}.\ \cite{agarwal1998constructing}.

In this paper, we describe a new algorithm for computing order-$k$ Delaunay mosaics
in Euclidean space of any finite dimension that stands out in its simplicity.
It employs an algorithm for weighted first-order Delaunay mosaics,
and otherwise uses only combinatorial operations.
It thus benefits from highly optimized existing implementations and, if desired,
can build upon their use of exact arithmetic.
Its complexity depends on the complexity of the algorithm used for weighted
Delaunay mosaics. Assuming it is linear in its output size, then out algorithm
is also linear in its output size.
We implement this algorithm and run it on various point sets,
shedding light on the size and other properties of order-$k$ Delaunay mosaics.
In particular, we compare the total size of the first $k$ Delaunay mosaics of
random point sets with the (tight) worst-case upper bound,
and we study the size of individual order-$k$ Delaunay mosaics,
for which no tight bounds are known in general.
As far as we are aware,
no such experimental investigations have been performed in the past,
possibly due to the lack of a practical algorithm.
We extend our algorithm to compute the radius function
on an order-$k$ Delaunay mosaic,
which gives us the subcomplexes realizing order-$k$ $\alpha$-shapes.
Open-source implementations of our algorithm
are available \cite{orderkgithub,rhomboidgithub}.

Our algorithm makes use of the \emph{rhomboid tiling} \cite{EdOs18}, which we will
introduce in Section \ref{sec:definitions} alongside other necessary definitions.
We will explore the combinatorial properties of this tiling and, by proxy,
the properties of order-$k$ Delaunay mosaics in Section \ref{sec:combinatorial}.
Using these results, we explain our algorithm in Section \ref{sec:algorithm}.
We present experimental results obtained with two implementations of this algorithm
in Section \ref{sec:experimental}. Section \ref{sec:alphashapes} introduces
a radius function on the order-$k$ Delaunay mosaics and a way to compute it
to yield order-$k$ $\alpha$-shapes.
We close with a discussion of possible extensions
and optimizations in Section \ref{sec:discussion}.

\section{Definitions}
\label{sec:definitions}

Given a locally finite set, $A \subseteq \Rspace^d$,
the \emph{(Voronoi) domain} of $Q \subseteq A$ is
$\domain{Q} = \{x \in \Rspace^d \mid \Edist{x}{q} \leq \Edist{x}{a},
  \forall q \in Q, \forall a \in A \setminus Q \}$.
Its \emph{order} is $\card{Q}$.
For each positive integer $k$, the \emph{order-$k$ Voronoi tessellation} is
$\Vor{k}{A} = \{ \domain{Q} \mid Q \subseteq A, \card{Q} = k \}$.
The \emph{order-$k$ Delaunay mosaic} is the cell complex dual to $\Vor{k}{A}$,
denoted $\Del{k}{A}$.
To realize the mosaic geometrically, we usually use the average of the
points in $Q$ as the location of the corresponding vertex in $\Rspace^d$.
In a few instances we use the sum rather than the average, for convenience. 
Figure \ref{fig:order2_full} shows an example for $k=2$.
As we will see shortly, in $d \geq 3$ dimensions,
the order-$k$ Delaunay mosaic is not necessarily simplicial even if the points
are in general position.
\begin{figure}[h]
  \centering \vspace{0.1in}
  \includegraphics[width=0.6\textwidth, page=1]{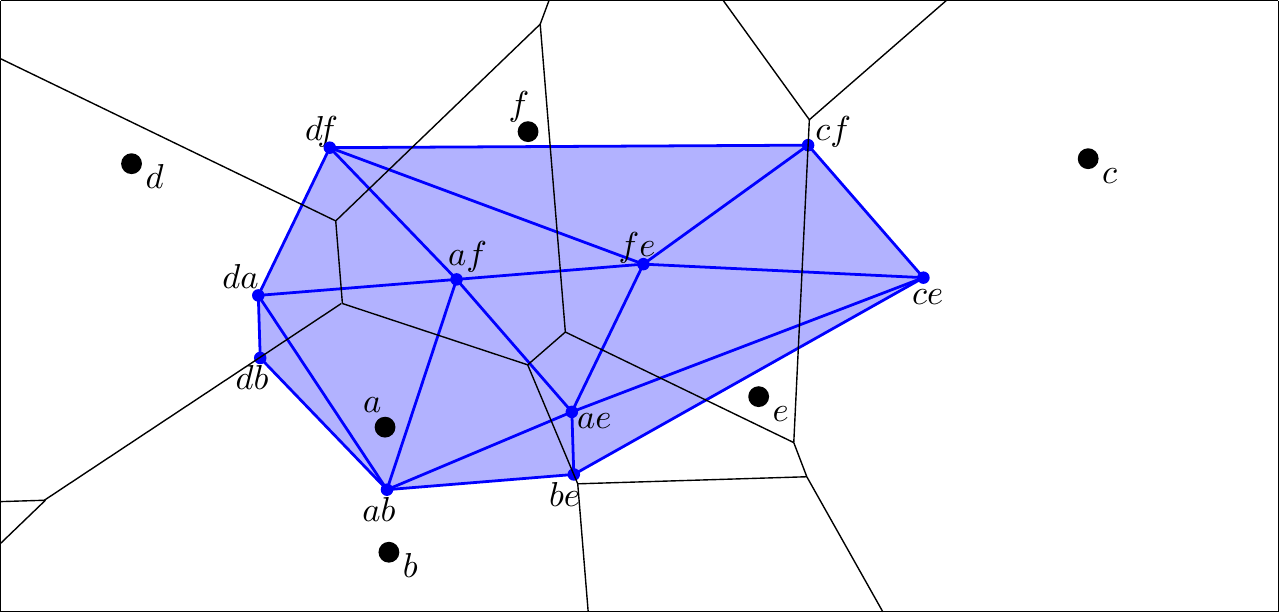}
  \caption{Superposition of the order-$2$ Voronoi tessellation (in \emph{black})
    and the order-$2$ Delaunay mosaic (in \emph{blue})
    of a set of six points in the plane.
    Each domain of the tessellation corresponds to two of these six points,
    and the corresponding vertex of the mosaic is the average of these two points.}
  \label{fig:order2_full}
\end{figure}

Assuming $A$ is in general position, \cite{EdOs18} established the existence
of a tiling in $\Rspace^{d+1}$ whose horizontal integer slices are the
Delaunay mosaics.
We recall the definition of the tiling and its most important properties.
Let $A \subseteq \Rspace^d$ be locally finite and in general position.
We construct a rhomboid, $\rho$,
for each partition $A = \Ain \sqcup \Aon \sqcup \Aout$
for which there exists a sphere $S$ in $\Rspace^d$ such that
all points in $\Ain$ lie inside $S$,
all points in $\Aon$ lie on $S$,
and all points in $\Aout$ lie outside $S$.
Whenever convenient, we write $\Ain(\rho) = \Ain$, $\Aon(\rho) = \Aon$,
and $\Aout(\rho) = \Aout$ to indicate the correspondence.
Due to general position of $A$, we have $\card{\Aon} \leq d+1$.
A \emph{combinatorial vertex} of $\rho$ is a collection of points
that contains $\Ain$ and is contained in $\Ain \cup \Aon$,
and we write
\begin{align}
  \vx{\rho}  &=  \{ \Ain \subseteq Q \subseteq \Ain \cup \Aon \}
  \label{eqn:newvx}
\end{align}
for the collection of combinatorial vertices of $\rho$.
Setting $y_a = (a, -1) \in \Rspace^{d+1}$, for every $a \in A$,
we draw the rhomboids in $\Rspace^{d+1}$ by mapping every combinatorial vertex
to $y_Q = \sum_{q \in Q} y_q$, in which $y_\emptyset = 0$, by convention.
The $(d+1)$-st coordinate of $y_Q$ is therefore $- \card{Q}$,
and we call $\card{Q}$ the \emph{depth} of the vertex.
The geometric realization of a rhomboid $\rho$ is the convex hull
of the locations of its combinatorial vertices, which is a rhomboid.
We refer to $\Ain(\rho)$ as the \emph{anchor vertex} of $\rho$.

The \emph{rhomboid tiling} of $A$, denoted $\Rhomboid{A}$, is the collection
of thus defined rhomboids.
By assumption of general position, every face of a rhomboid is again defined
by a sphere as described above and thus belongs to the rhomboid tiling.
As proved in \cite{EdOs18}, any two rhomboids are either disjoint or intersect
in a common face, which implies that the rhomboid tiling is a complex
embedded in $\Rspace^{d+1}$; see Figure \ref{fig:duality} for an example.
The following properties have been observed in \cite{EdOs18}:
\begin{proposition}[Rhomboid Tiling]
  \label{prop:RhomboidTiling}
  Let $A \subseteq \Rspace^d$ be locally finite and in general position.
  \begin{enumerate}
    \item[1.] $\Rhomboid{A}$ is dual to an arrangement of hyperplanes in $\Rspace^{d+1}$;
    \item[2.] the slice of $\Rhomboid{A}$ at depth $k$
              is the order-$k$ Delaunay mosaic of $A$, scaled by a factor $k$.
  \end{enumerate}
\end{proposition}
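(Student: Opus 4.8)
The plan is to run everything through a single polarity between spheres in $\Rspace^d$ and points in $\Rspace^{d+1}$. To a sphere $S$ with center $c$ and radius $r$ I associate the point $p_S = (c, \tfrac12(\norm{c}^2 - r^2)) \in \Rspace^{d+1}$, and to each $a \in A$ the hyperplane $h_a = \{ q \in \Rspace^{d+1} \mid \scalprod{q}{y_a} = \tfrac12 \norm{a}^2 \}$, whose normal is exactly the rhomboid edge direction $y_a = (a,-1)$. Expanding $\Edist{a}{c}^2 = \norm{c}^2 - 2\scalprod{c}{a} + \norm{a}^2$ shows at once that $a$ lies inside, on, or outside $S$ precisely when $p_S$ lies below, on, or above $h_a$. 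First I would record this equivalence, because it converts the defining sphere condition of a rhomboid into a statement about the position of $p_S$ in the arrangement $H = \{ h_a \mid a \in A \}$.

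For the first claim, observe that the partition $A = \Ain \sqcup \Aon \sqcup \Aout$ attached to a rhomboid $\rho$ is precisely the position vector of $p_S$ relative to $H$: the set $\Aon$ records the hyperplanes through $p_S$, while $\Ain$ and $\Aout$ record the two sides of the remaining hyperplanes. Conversely every face of $H$ is a maximal set of points sharing one such position vector. I would therefore show that sending $\rho$ to the face of $H$ carrying the position vector $(\Ain,\Aon,\Aout)$ is a bijection from $\Rhomboid{A}$ onto the faces of $H$; general position keeps $\card{\Aon} \leq d+1$, and $\dim \rho = \card{\Aon} = (d+1) - \dim(\text{face})$ with incidences reversed, which is exactly the assertion that $\Rhomboid{A}$ is dual to $H$. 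The bookkeeping to watch here is the real-radius constraint $r^2 \geq 0$, i.e.\ that $p_S$ lie on or below the standard paraboloid; I would check that each relevant face supports a genuine sphere, or else admit point- and imaginary-radius spheres and note that the slices of interest are unaffected.

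For the second claim, depth is the affine function $q \mapsto -q_{d+1}$ restricted to the tiling, so the slice at depth $k$ is the intersection with the horizontal hyperplane $\{ q_{d+1} = -k \}$. Each rhomboid is the parallelepiped $y_{\Ain} + \sum_{a \in \Aon} [0,y_a]$, whose vertex depths range over $[\card{\Ain}, \card{\Ain} + \card{\Aon}]$, so $\rho$ meets the slice iff $k$ lies in this interval, equivalently iff some $Q \in \vx{\rho}$ has $\card{Q} = k$. I would first treat vertices: a depth-$k$ vertex $y_Q$ with $\Aon = \emptyset$ exists iff some sphere has exactly $Q$ strictly inside, which is exactly the condition that $\domain{Q}$ be $d$-dimensional; since $y_Q = (\sum_{q \in Q} q, -k)$ projects to $\sum_{q \in Q} q = k \bar{Q}$ with $\bar{Q}$ the average of $Q$, these are the vertices of $\Del{k}{A}$ scaled by $k$. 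For the higher cells I would invoke the first claim: the rhomboids meeting the depth-$k$ slice are dual to the faces of $H$ crossed by the $k$-level of the arrangement, and the $k$-level of $H$ is the classical lifted image of $\Vor{k}{A}$; dualizing it, the depth-$k$ slice realizes $\Del{k}{A}$. What remains is a direct computation showing that the cross-section of each parallelepiped is the correctly scaled dual Delaunay cell.

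The main obstacle I anticipate is not any single identity but the global matching of incidences: showing that the cross-sectional polytopes of the individual rhomboids glue along shared faces exactly as the cells of $\Del{k}{A}$ do, rather than merely matching their vertex sets. Tied to this are two pieces of careful bookkeeping, namely fixing the level convention so that the $k$-level of $H$ returns order-$k$ rather than order-$(k \pm 1)$ Voronoi regions, and handling the boundary between real and imaginary spheres in the duality of the first claim. General position is what keeps both in hand, since it forces $\card{\Aon} \leq d+1$ and rules out coincidental alignments within the slice.
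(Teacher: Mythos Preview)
The paper does not contain a proof of this proposition; it is quoted from \cite{EdOs18}, and only the dual hyperplane arrangement itself is spelled out later (Section~\ref{sec:vertexset}) via the affine maps $f_a(x) = 2\scalprod{x}{a} - \norm{a}^2$. Your polarity is, up to a harmless factor of $2$ in the last coordinate, exactly that arrangement, and your plan---matching rhomboids to arrangement faces through their shared position vectors $(\Ain,\Aon,\Aout)$, then reading the depth-$k$ slice off the $k$-level using that depth-$k$ chambers project to order-$k$ Voronoi domains---is the standard route and is sound.

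Your flagged worry about imaginary radii dissolves quickly: each $h_a$ is tangent to the paraboloid and lies weakly below it, so any arrangement face with $\Ain \cup \Aon \neq \emptyset$ already sits on or below the paraboloid and therefore contains a point $p_S$ with $r^2 \geq 0$; only the depth-$0$ chamber reaches above, and its partition $\Ain = \emptyset$ is realized by any point-sphere far from $A$. The one place to be careful, as you note, is the level convention: in the paper's convention a chamber has depth $\card{\Ain}$ (number of hyperplanes \emph{above} it), so make sure your ``$k$-level'' is the closure of the depth-$k$ chambers rather than an off-by-one variant.
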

\medskip
The hyperplane arrangement will be introduced in Section \ref{sec:vertexset}.
We elaborate on the second property: that each cell of the order-$k$ Delaunay mosaic
is a slice of some rhomboid.
Combinatorially, each rhomboid is a cube and, again combinatorially,
each cell of $\Del{k}{A}$ is a slice
orthogonal to the cube diagonal that passes through a non-empty set of the vertices.
For the $(d+1)$-cube, there are $d+2$ such slices, which we index from top to bottom
by the \emph{generation} $0 \leq g \leq d+1$.
The $g$-th slice passes through $\binom{d+1}{g}$ vertices,
so we have a vertex at generations $g = 0,d+1$,
a $d$-simplex at generations $g = 1,d$,
and some other $d$-dimensional polytope at generations $2 \leq g \leq d-1$.
In $d+1 = 3$ dimensions, we have a vertex, a triangle, another triangle, and another vertex,
see Figure \ref{fig:full_3d};
but already in $d+1 = 4$ dimensions, the middle slice is not a simplex;
see Figure \ref{fig:slices}.
We remark that in addition to the order-$k$ Delaunay mosaic,
also the degree-$k$ Delaunay mosaic,
which is the dual of the degree-$k$ Voronoi tessellation \cite{EdSe86},
can be obtained as a slice of $\Rhomboid{A}$ at depth $k - \frac{1}{2}$.
\begin{figure}[hbt]
  \centering
  \includegraphics[width=0.4\textwidth, page=2]{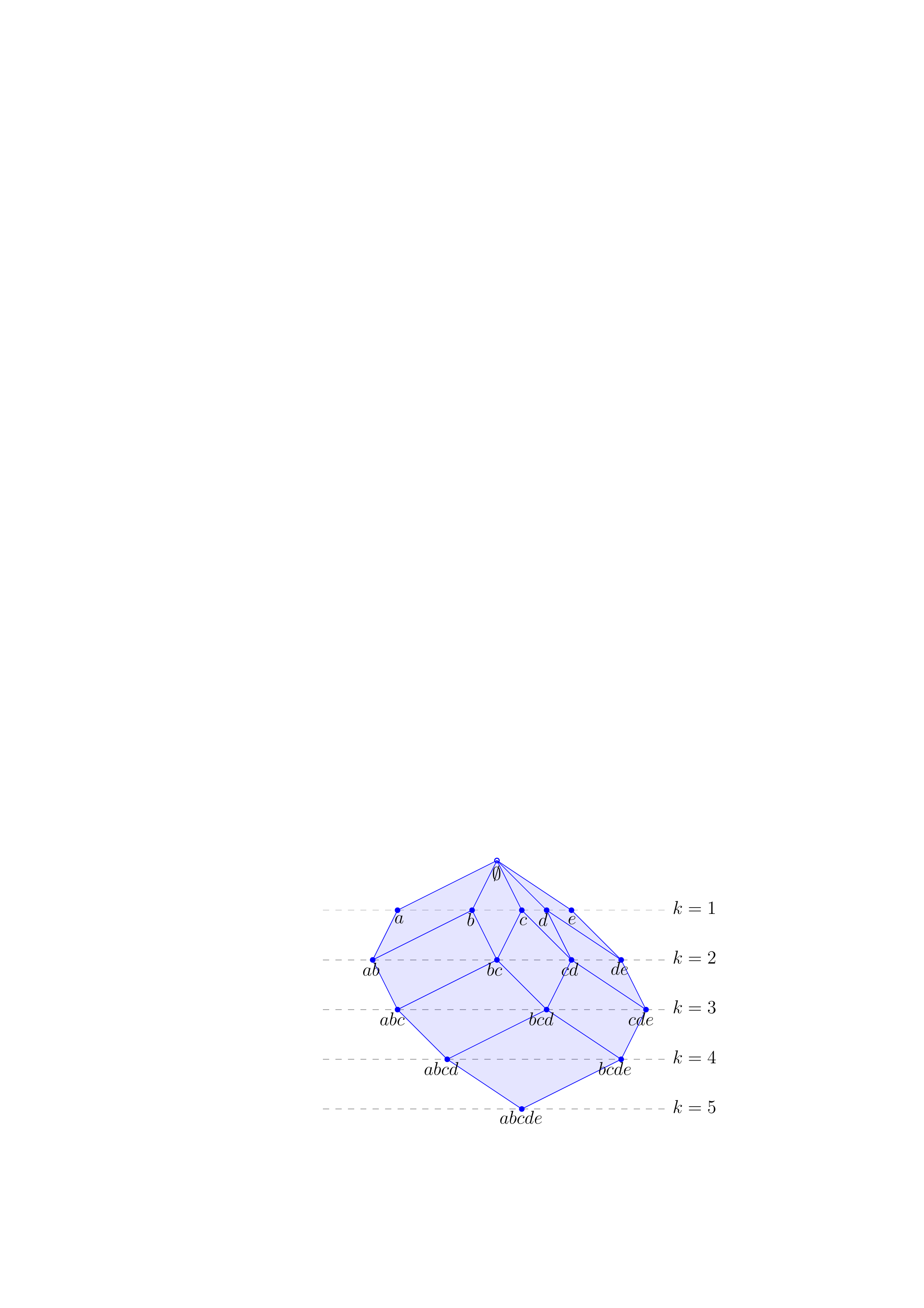}
  \hspace{0.3in}
  \includegraphics[width=0.4\textwidth, page=2]{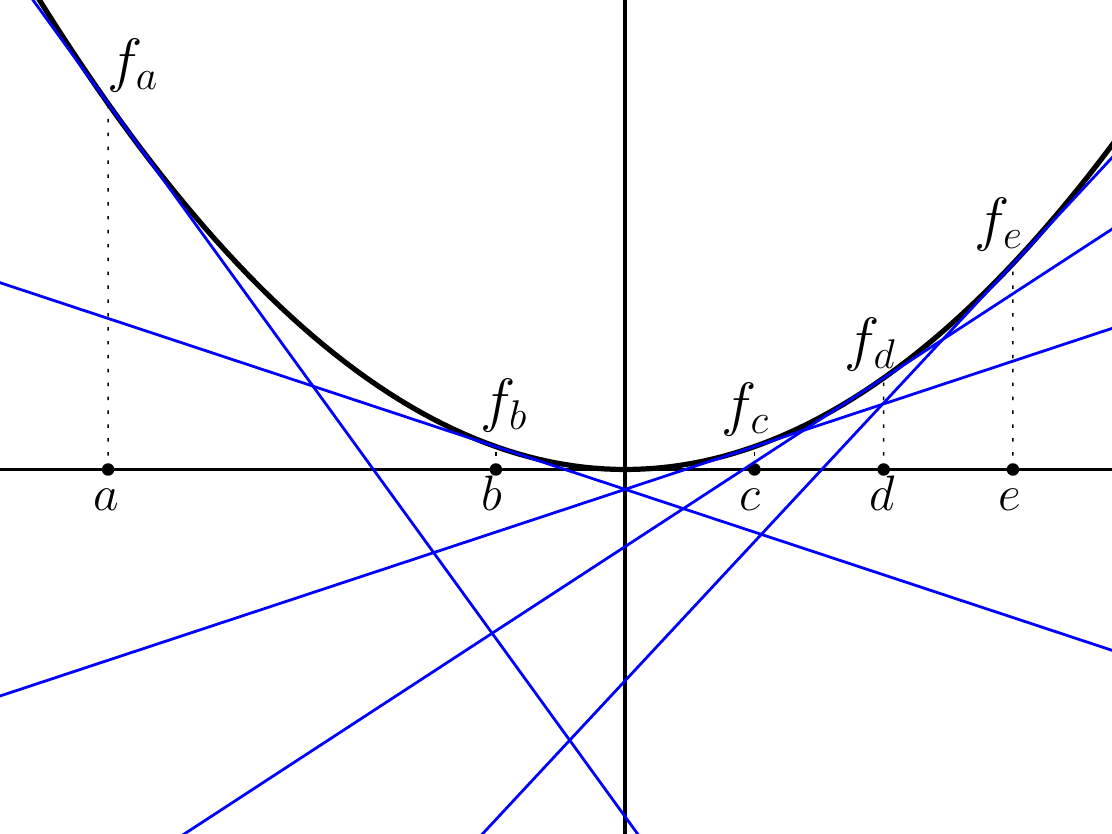}
  \vspace{0.1in}
  \caption{\emph{Left:} the rhomboid tiling of five points in $\Rspace^1$.
    The highlighted rhomboid defined by $\Ain = \{c\}$ and $\Aon = \{b,d\}$
    is the convex hull of the points $y_c$, $y_{\{b,c\}}$, $y_{\{c,d\}}$,
    and $y_{\{b,c,d\}}$.
    The horizontal line at depth $k$ intersects the tiling in
    the order-$k$ Delaunay mosaic.
    \emph{Right:} the dual hyperplane arrangement.
    Following the dotted lines connecting the points of $A$ on the horizontal axis
    to the paraboloid, we find the corresponding tangent hyperplanes.
    The highlighted rhomboids of dimension $j = 0, 1, 2$ are dual to the
    highlighted cells of dimension $2-j$ in the arrangement.}
  \label{fig:duality}
\end{figure}
\begin{figure}[hbt]
  \centering
  \vspace{0.1in}
  \includegraphics[width=0.90\textwidth]{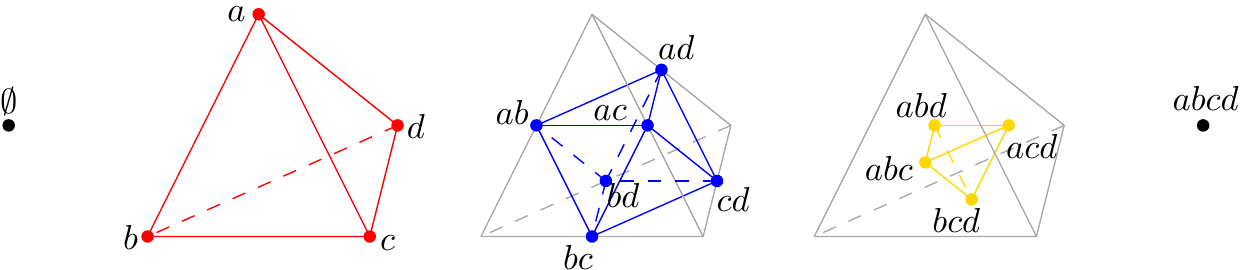}
  \caption{Slices of a $4$-dimensional rhomboid defined by
    $\Ain(\rho) = \emptyset$ and $\Aon(\rho) = \{a,b,c,d\}$.
    The non-trivial slices are a tetrahedron at generation $g=1$,
    an octahedron at generation $g=2$, and another tetrahedron at generation $g=3$.}
  \label{fig:slices}  
\end{figure}

\section{Combinatorial Properties}
\label{sec:combinatorial}

As proved in \cite{aurenhammer1990}, the order-$k$ Delaunay mosaic
is the projection of the boundary complex of a convex polyhedron in $\Rspace^{d+1}$.
To explain this construction, we define the \emph{lift} of $a \in \Rspace^d$
as the point $\lift{a} = (a, \norm{a}^2) \in \Rspace^{d+1}$.
For each $k$-tuple $Q \subseteq A$,
we take the barycenter of their lifts, $\frac{1}{k}\sum_{q\in Q} \lift{q}$,
and obtain the order-$k$ Delaunay mosaic as the vertical projection
of the lower faces of the convex hull of these barycenters.
Equivalently, we can interpret each barycenter of lifts as a weighted point
in $\Rspace^d$ and get the order-$k$ Delaunay mosaic as the weighted
order-$1$ Delaunay mosaic of the weighted points.
By itself, this approach does not scale well with $k$
since there are $\binom{\card{A}}{k}$ such barycenters.
Most barycenters, however, are irrelevant as they do not contribute
to the lower faces of the convex hull.
If we could, somehow, identify the relevant barycenters without wasting time
on the irrelevant ones,
this procedure would efficiently construct the cells of the order-$k$ Delaunay mosaic
by computing the weighted first-order Delaunay mosaic.
We will see how this can be done in Section \ref{sec:vertexset}.

In $d \geq 3$ dimension, not all cells of $\Del{k}{A}$ are simplicial,
even if the points in $A$ are in general position.
The cells carry important information, which for some applications is essential
and cannot be easily recovered from a triangulation.
This poses an additional challenge because most algorithms for computing
convex hulls or weighted first-order Delaunay mosaics return
a triangulated version of the correct mosaic.
As explained in the following section, 
we address this issue by predicting the cells from their corresponding rhomboids.

\subsection{Predicting Cells}
Given a cell $\sigma$ in the order-$k$ Delaunay mosaic,
the following lemma identifies the rhomboid, $\rho$, that $\sigma$ is a slice of;
see Figure \ref{fig:full_3d} for an illustration.
Write $\vx{\sigma}$ for the set of combinatorial vertices whose locations
are the vertices of $\sigma$.
Clearly, $\vx{\sigma} \subseteq \vx{\rho}$.
\begin{lemma}
  \label{lem:xinxon}
  Let $\rho \in \Rhomboid{A}$ be a rhomboid
  and $\sigma \in \Del{k}{A}$ a slice of $\rho$.
  Then $\Ain(\rho) = \bigcap \vx{\sigma}$,
  $\Aon(\rho) = \bigcup \vx{\sigma} \setminus \Ain(\rho)$,
  and the generation of $\sigma$ is $k - \card{\Ain(\rho)}$.
\end{lemma}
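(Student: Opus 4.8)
The plan is to compute the vertex set $\vx{\sigma}$ of the slice explicitly and then recover $\Ain(\rho)$, $\Aon(\rho)$, and the generation by elementary set algebra. First I would pin down $\vx{\sigma}$. The rhomboid $\rho$ is the parallelepiped with anchor $y_{\Ain}$ and generators $\{y_q \mid q \in \Aon\}$, and since the last coordinate of $y_Q$ is $-\card{Q}$, the depth $Q \mapsto \card{Q}$ restricts to an affine function on $\rho$; moreover each edge of $\rho$ joins two combinatorial vertices whose depths differ by exactly one. Hence the horizontal hyperplane at depth $k$ meets no edge of $\rho$ in its relative interior, so the vertices of $\sigma$ are precisely the combinatorial vertices of $\rho$ at depth $k$. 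Writing each such vertex as $Q = \Ain \cup R$ with $R \subseteq \Aon$ and setting $g = k - \card{\Ain}$, this yields
\[
  \vx{\sigma} \;=\; \{\, \Ain \cup R \mid R \subseteq \Aon,\ \card{R} = g \,\}.
\]
(That all $\binom{\card{\Aon}}{g}$ of these points are genuine vertices of $\sigma$, not relative-interior points, is the standard fact that the depth-$g$ layer of a parallelepiped's vertices is in convex position, forming a hypersimplex.)

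Given this description, the three claims follow immediately as long as $1 \le g \le \card{\Aon} - 1$. For the intersection, every $a \in \Aon$ is omitted by some $g$-element subset $R$ (possible since $g \le \card{\Aon}-1$), so the $R$'s have empty common intersection and $\bigcap \vx{\sigma} = \Ain$. For the union, every $a \in \Aon$ belongs to some $g$-element subset (possible since $g \ge 1$), so $\bigcup \vx{\sigma} = \Ain \cup \Aon$ and therefore $\bigcup \vx{\sigma} \setminus \Ain = \Aon$, using $\Ain \cap \Aon = \emptyset$. The generation equals $g = k - \card{\Ain}$ by the very definition of generation, which indexes the depth layers of $\rho$ downward from the anchor (at depth $\card{\Ain}$).

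The step that needs care — and the main obstacle — is the genericity of the slice, namely that $\rho$ is exactly the rhomboid for which $1 \le g \le \card{\Aon}-1$, or else the trivial rhomboid with $\Aon = \emptyset$ and $g = 0$. This is where uniqueness of $\rho$ given $\sigma$ really lives: the two degenerate slices of a positive-dimensional rhomboid, its top vertex ($g=0$) and its bottom vertex ($g = \card{\Aon}$), would make $\bigcup \vx{\sigma}$ too small or $\bigcap \vx{\sigma}$ too large and so must be ruled out. They are ruled out because, as a $0$-cell of $\Del{k}{A}$, a mosaic vertex is the slice of the $0$-dimensional rhomboid anchored at the corresponding $k$-set — for which $\vx{\sigma}$ is a single set and all three formulas hold trivially — rather than of any larger rhomboid on whose boundary its location happens to lie. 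I would therefore argue first that $\sigma$ determines this rhomboid uniquely (equivalently, that $\sigma$ is a transversal slice), after which the set-theoretic computation above is routine.
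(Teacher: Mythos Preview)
Your approach is essentially the paper's own: write $\vx{\sigma} = \{\Ain \cup R \mid R \subseteq \Aon,\ \card{R}=g\}$ and read off $\Ain$, $\Aon$, and $g=k-\card{\Ain}$ by elementary set algebra on the $g$-subsets of $\Aon$. You are in fact more careful than the paper about the extreme slices $g=0$ and $g=\card{\Aon}$; the paper's proof tacitly works in the range $1\le g$ (and implicitly $g<\card{\Aon}$ for the intersection claim), relying on the convention that $\sigma\in\Del{k}{A}$ is the slice of the unique rhomboid of matching dimension, so your extra paragraph on uniqueness is a legitimate refinement rather than a detour.
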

\begin{proof}
  Recall that
  $\vx{\rho} = \{\Ain(\rho) \subseteq Q \subseteq \Ain(\rho) \cup \Aon(\rho) \}$,
  in which $\Ain(\rho)$ and $\Aon(\rho)$ are disjoint.
  Since the depth of a vertex is determined by its cardinality, and the vertices
  of a slice are by definition all at the same depth, the vertices
  of the generation-$g$ slice all satisfy $\card{Q} - \card{\Ain(\rho)} = g$.
  The intersection of all $g$-subsets of $\Aon(\rho)$ is of course empty,
  which implies that the intersection of the combinatorial vertices
  of the slice is $\Ain(\rho)$.
  Furthermore, $\bigcup \vx{\sigma} = \bigcup \vx{\rho}$ for every slice
  $\sigma$ of $\rho$ with generation $g \geq 1$.
  The union of all $g$-subsets of $\Aon(\rho)$ is $\Aon(\rho)$
  itself, and thus $\Aon(\rho) = \bigcup \vx{\sigma} \setminus \Ain(\rho)$.
  Finally, the generation of $\sigma$ is the difference in depth of
  the anchor vertex, $\Ain(\rho)$, and the slice defining $\sigma$. 
  The depth of $\sigma$ is $k$ and the depth of 
  $\Ain(\rho)$ is its cardinality, which completes the proof.
\end{proof}

\begin{figure}[hbt]
  \centering
  \includegraphics[width=0.70\textwidth, page=18]{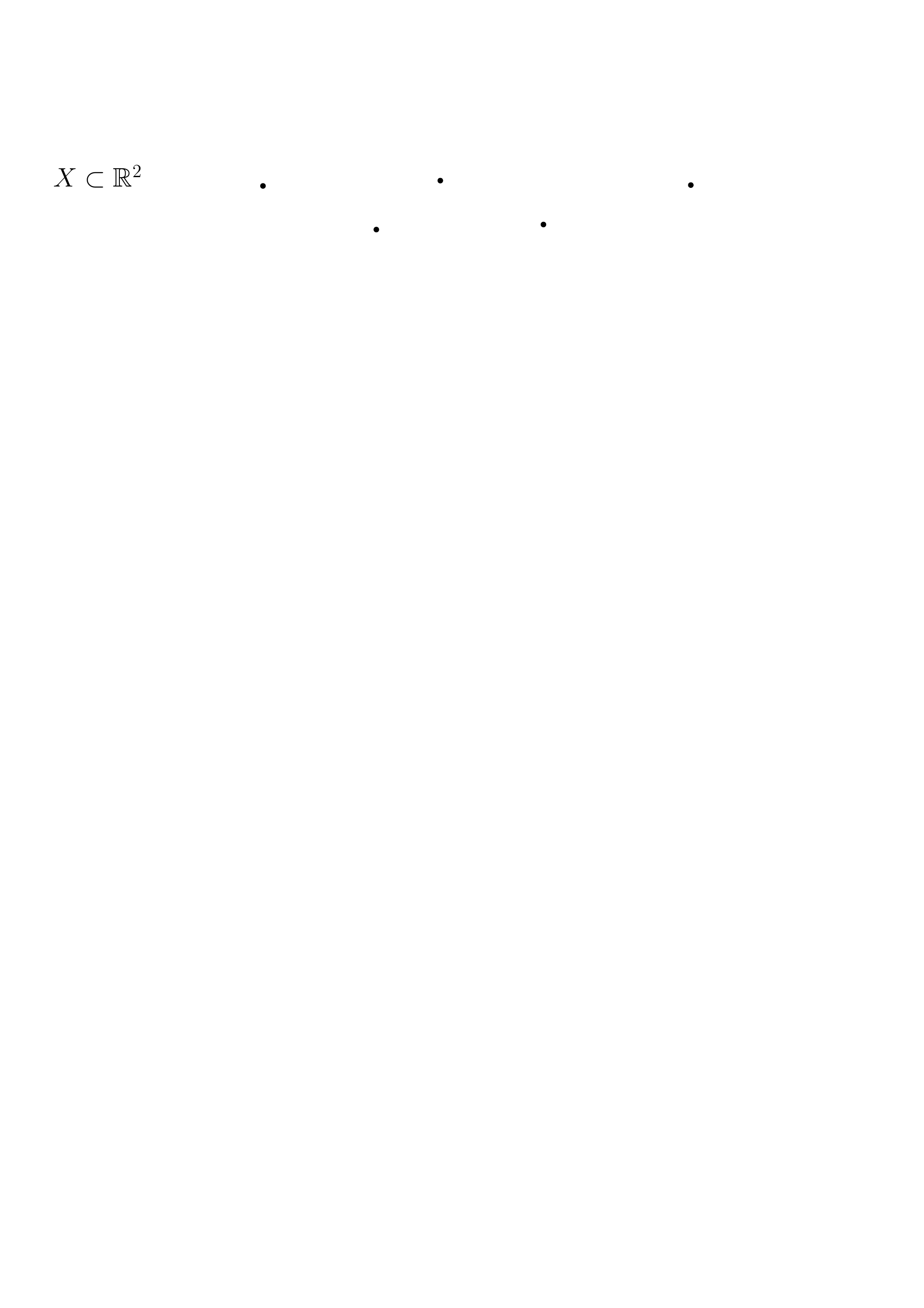}
  \caption{First-, second-, and third-order Delaunay mosaics of the
    set $A = \{a,b,c,d,e\}$ in $\Rspace^2$ as slices of the $3$-dimensional
    rhomboid tiling.
    For clarity, only two of the rhomboids are shown, with their
    first-generation slices in \emph{red} and second-generation slices
    in \emph{dark blue}.
    The rhomboids on the \emph{left} and \emph{right} are defined by
    $\Ain = \{b\}, \Aon = \{a,c,d\}$ and
    $\Ain = \emptyset, \Aon = \{c,d,e\}$, respectively.}
  \label{fig:full_3d}
\end{figure}

If all of our order-$k$ Delaunay cells are triangulated---e.g.\ due to being
the output of a weighted first-order Delaunay algorithm---we cannot directly
apply Lemma \ref{lem:xinxon}.
Indeed, if $\tau$ is a simplex that is part of a triangulation
of a slice $\sigma$ of a rhomboid $\rho$, then
$\bigcap \vx{\tau}$ and $\bigcup \vx{\tau}$ do not necessarily
equal $\Ain (\rho)$ and $\Ain (\rho) \cup \Aon (\rho)$.
We can, however, still identify whether $\tau$ is a first-generation slice
of $\rho$ and thus in fact is equal to $\sigma$.
Using Lemma \ref{lem:xinxon}, we can then obtain $\rho$.
\begin{lemma}
  \label{lem:identify_generation}
  A $d$-simplex, $\tau$, in a triangulation of $\Del{k}{A}$ is
  a first-generation $d$-cell of $\Del{k}{A}$ if and only if the 
  intersection of its combinatorial vertices is of size $k-1$.
\end{lemma}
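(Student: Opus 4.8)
The plan is to translate both directions into a statement about the combinatorial vertices of $\tau$, viewed as subsets of $A$. Since $\tau$ is a $d$-simplex occurring in a triangulation of $\Del{k}{A}$, it triangulates (a piece of) a $d$-dimensional cell $\sigma$ of the mosaic, and by Proposition \ref{prop:RhomboidTiling} together with the discussion of slices, $\sigma$ is a slice of some rhomboid $\rho$. A $d$-dimensional slice can only arise from a $(d+1)$-dimensional rhomboid, so $\card{\Aon(\rho)} = d+1$; writing $g$ for the generation of $\sigma$, Lemma \ref{lem:xinxon} gives $\card{\Ain(\rho)} = k - g$. Every vertex of $\tau$ is a generation-$g$ combinatorial vertex of $\rho$, hence of the form $\Ain(\rho) \cup T_i$ with $T_i \subseteq \Aon(\rho)$ and $\card{T_i} = g$; the $d+1$ vertices of the simplex are distinct, so the $d+1$ sets $T_i$ are distinct $g$-subsets of a $(d+1)$-element set.

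For the forward direction, suppose $\tau$ is a first-generation $d$-cell, i.e.\ $\tau = \sigma$ with $g = 1$. Then Lemma \ref{lem:xinxon} gives directly $\bigcap \vx{\tau} = \Ain(\rho)$, whose size is $\card{\Ain(\rho)} = k - 1$, as claimed.

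For the converse, I would compute the intersection explicitly. Since $\Ain(\rho)$ and $\Aon(\rho)$ are disjoint and $\Ain(\rho)$ lies in every vertex, $\bigcap \vx{\tau} = \Ain(\rho) \cup \bigcap_i T_i$, so $\card{\bigcap \vx{\tau}} = (k - g) + \card{\bigcap_i T_i}$. Assuming this equals $k-1$ forces $\card{\bigcap_i T_i} = g - 1$. The crux is then a counting argument: if the common intersection $I = \bigcap_i T_i$ has size $g-1$, each $T_i$ is $I$ augmented by a single extra element $b_i \in \Aon(\rho) \setminus I$, and distinctness of the $T_i$ makes the $b_i$ pairwise distinct. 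As $\card{\Aon(\rho) \setminus I} = (d+1) - (g-1) = d + 2 - g$, accommodating $d+1$ distinct $b_i$ requires $d + 1 \le d + 2 - g$, i.e.\ $g \le 1$, hence $g = 1$. Thus $\tau$ is a generation-$1$ slice; moreover its $d+1$ singletons $\{b_i\}$ then exhaust $\Aon(\rho)$, so $\tau$ coincides with the entire first-generation slice $\sigma$ and is indeed a first-generation $d$-cell.

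The one idea that makes the proof work is recognizing that the hypothesis $\card{\bigcap \vx{\tau}} = k-1$ is exactly the cardinality condition that $d+1$ distinct $g$-subsets of a $(d+1)$-set cannot satisfy unless $g = 1$. Everything else is bookkeeping, but I expect the main care to lie in justifying $\card{\Aon(\rho)} = d+1$, which pins down the size of the ambient set driving the count, and in confirming that distinctness of the simplex vertices transfers to distinctness of the subsets $T_i$.
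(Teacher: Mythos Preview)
Your proposal is correct and follows essentially the same approach as the paper: both identify $\tau$ as lying inside a generation-$g$ slice of a $(d+1)$-rhomboid, write the combinatorial vertices as $\Ain(\rho)\cup T_i$ with $T_i$ distinct $g$-subsets of the $(d+1)$-element set $\Aon(\rho)$, and reduce the hypothesis $\card{\bigcap\vx{\tau}}=k-1$ to the impossibility of $d+1$ such subsets having a common intersection of size $g-1$ unless $g=1$. Your write-up is somewhat more explicit than the paper's---you spell out both directions separately and make the pigeonhole count $d+1\le d+2-g$ explicit where the paper simply asserts the conclusion---but the underlying argument is the same.
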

\begin{proof}
  Let $\sigma$ be the $d$-cell in $\Del{k}{A}$ that contains $\tau$
  in its triangulation, and assume $\sigma$ is a generation-$g$ slice of $\rho$.
  From Lemma \ref{lem:xinxon}, we know that $\Ain(\rho) \subseteq v$ for all
  $v \in \vx{\sigma}$, and $\card{\Ain(\sigma)} = k-g$.
  The remaining $g$ points in every $v$ are from $\Aon(\rho)$.
  We have $\vx{\tau} \subseteq \vx{\sigma}$ with $\card{\vx{\tau}} = d+1$.
  So for $\tau$ to consist of vertices whose common 
  intersection is of size $k-1$, there need to be $d+1$ distinct $g$-subsets
  of $\Aon(\rho)$ that all have $g-1$ points in common. 
  However, as $\card{\Aon(\rho)} = d+1$, this is not possible unless $g=1$.
\end{proof}

\subsection{Identifying Vertices}
\label{sec:vertexset}

Given a triangulation of the order-$k$ Delaunay mosaic,
we just saw how to identify its first-generation cells.
From these, we can obtain the corresponding rhomboids
and their higher-generation slices.
We shall now prove that if we have triangulations of the order-$j$ Delaunay
mosaics, for all $j < k$,
we can assemble the complete vertex set of the order-$k$ Delaunay mosaic
by taking slices at depth $k$ obtained
from first-generation cells at lower depths.
We note that this only holds in the unweighted setting.

\medskip
To prepare the proof of this result, we recall the definition of
the hyperplane arrangement postulated by Proposition \ref{prop:RhomboidTiling}.
For each point $a \in A$, write $f_a \colon \Rspace^d \to \Rspace$
for the affine map defined by
$f_a (x) = 2 \scalprod{x}{a} - \norm{a}^2 = \norm{x}^2 - \Edist{x}{a}^2$.
The graph of $f_a$ is a hyperplane in $\Rspace^{d+1}$ that is tangent to the
paraboloid $\Paraboloid$ of points $(x,z) \in \Rspace^d \times \Rspace$ that satisfy
$z = \norm{x}^2$.
The collection of such hyperplanes decomposes $\Rspace^{d+1}$ into convex cells,
which we call the \emph{hyperplane arrangement} of $A$, denoted $\Arr{A}$;
see Figure \ref{fig:duality}.
The \emph{cells} in the arrangement are intersections of hyperplanes
and closed half-spaces.
More formally, for each cell there is
an ordered three-partition $A = \Ain \sqcup \Aon \sqcup \Aout$
such that the cell consists of all points $(x, z) \in \Rspace^d \times \Rspace$
that satisfy $z \leq f_a (x)$, if $a \in \Ain$;
$z = f_a (x)$, if $a \in \Aon$; and
$z \geq f_a (x)$, if $a \in \Aout$.
This three-partition is the key to establishing the bijection between
the cells of $\Arr{A}$ and the rhomboids of $\Rhomboid{A}$
that proves the duality claimed in Proposition \ref{prop:RhomboidTiling}.
We call top-dimensional cells of $\Arr{A}$ \emph{chambers};
they satisfy $\Aon = \emptyset$.
The depth of a chamber is $\card{\Ain}$ or, equivalently, the number of hyperplanes
that are above this chamber; it equals the depth of the dual vertex in $\Rhomboid{A}$.
To see the aforementioned relationship between the arrangement
and the higher-order Voronoi tessellations, we observe that
the chamber in $\Arr{A}$ with $\Ain = Q$ vertically projects to $\domain{Q}$.
We can therefore construct $\Vor{k}{A}$ by computing and projecting all chambers
whose ordered three-partitions satisfy $\card{\Ain} = k$;
see \cite[Chapter 13]{Ede87} or \cite{EdSe86}.

\medskip
We call a chamber $\gamma$ a \emph{bowl} if only one of its facets 
bounds it from above or,
equivalently, if there is only one chamber at the next lower depth
that shares a facet with $\gamma$.
We call the hyperplane that contains this facet the \emph{lid} of the bowl.
\begin{lemma} \label{lem:pbad}
  A hyperplane that is a lid of a bowl at depth $1$ is not a lid of any other chambers.
\end{lemma}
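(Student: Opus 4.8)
The plan is to reduce the statement to a monotonicity property of a simple counting function and to exploit the tangency geometry of the hyperplanes $f_a$. First I would make explicit what a bowl with lid $f_a$ means in terms of the points of $A$. A chamber $\gamma$ with $\Ain(\gamma) = Q$ projects to the order-$k$ domain $\domain{Q}$ with $k = \card{Q}$, and its upper boundary is the graph of $x \mapsto \min_{q \in Q} f_q(x)$ over $\domain{Q}$. Since $f_q(x) = \norm{x}^2 - \Edist{x}{q}^2$, the hyperplane $f_q$ bounds $\gamma$ from above exactly over the region of $\domain{Q}$ in which $q$ is the farthest among the points of $Q$ from $x$. Hence $\gamma$ is a bowl with lid $f_a$ if and only if a single point $a \in Q$ is the farthest point of $Q$ from every $x \in \domain{Q}$; equivalently, $a$ is the $k$-th nearest neighbour (the farthest of the $k$ nearest) of every such $x$. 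In particular every depth-$1$ chamber $\gamma$ with $\Ain(\gamma) = \{a\}$ is trivially a bowl with lid $f_a$. Since a hyperplane is a lid only of a bowl, since the unique depth-$1$ bowl with lid $f_a$ is this $\gamma$, and since a depth-$0$ chamber has no facet above it, the lemma is equivalent to the assertion that $f_a$ is not the lid of any bowl of depth $k \geq 2$.

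The key tool will be a monotonicity statement: for a fixed point $a$ and any ray emanating from $a$, the number of points of $A$ closer to $x$ than $a$ is non-decreasing as $x$ moves away from $a$ along the ray. This holds because, for each $b \neq a$, the set $\{x : \Edist{x}{b} < \Edist{x}{a}\}$ is an open half-space bounded by the bisector of $a$ and $b$, and this half-space does not contain $a$ itself; a ray from $a$ therefore crosses its boundary at most once and, once inside, stays inside. Consequently the count of points closer than $a$ increases by one each time the ray meets such a bisector and never decreases.

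With these two observations the lemma follows by contradiction. Suppose some depth-$k$ chamber $\gamma$ with $k \geq 2$ is a bowl with lid $f_a$, and put $Q = \Ain(\gamma)$, so that $a$ is the farthest point of $Q$ throughout $\domain{Q}$ and exactly $k - 1 \geq 1$ points are closer than $a$ there. Pick any $x_0 \in \domain{Q}$ and move $x$ from $x_0$ straight toward $a$. By the monotonicity above, the count of points closer than $a$ is non-increasing along this motion and must reach $0$ at $a$; since it starts at $k - 1 \geq 1$, the first bisector between $a$ and another point that the segment meets decreases the count, turning some $b \in Q \setminus \{a\}$ from closer than $a$ into farther than $a$. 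I would then check that immediately past this crossing the set of $k$ nearest neighbours is still exactly $Q$, with $b$ now the farthest element, so that $f_b$ also bounds $\gamma$ from above, contradicting that $\gamma$ is a bowl.

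The main obstacle is the bookkeeping in this last step: I must rule out the possibility that moving toward $a$ we leave $\domain{Q}$ before the count drops. Here the characterisation and the monotonicity combine. As long as no bisector between $a$ and another point has been crossed, the set of points closer than $a$ is unchanged and equal to $Q \setminus \{a\}$, so $a$ remains the $k$-th nearest neighbour and $Q$ remains the set of $k$ nearest neighbours; thus we are still inside $\domain{Q}$. Moreover we cannot exit $\domain{Q}$ across a facet while $a$ is the $k$-th nearest, since such a facet is a bisector between the $k$-th nearest point $a$ and an outside point, and crossing it would make that outside point closer than $a$, increasing the count, which the monotonicity forbids as we move toward $a$. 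Hence the first event along the segment is genuinely the count dropping at a bisector of $a$ with a point $b \in Q$, inside $\domain{Q}$, and the contradiction is complete.
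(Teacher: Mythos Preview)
Your proof is correct, but it takes a genuinely different route from the paper's. The paper argues entirely inside the arrangement $\Arr{A}$ using convexity: since a bowl $\gamma$ is the intersection of the half-space below its lid $P$ with the half-spaces above its lower facet hyperplanes, any point of $P$ outside the top facet of $\gamma$ must lie below one of those lower hyperplanes; consequently any \emph{other} chamber with $P$ as an upper facet lies below at least two hyperplanes, hence has depth $\geq 2$. Applying this observation with the roles of the two bowls reversed gives the contradiction in two lines.

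Your argument instead passes to the primal picture: you translate ``bowl with lid $f_a$'' into ``$a$ is the farthest among the $k$ nearest throughout $\domain{Q}$'', and then exploit the elementary fact that the number of points closer than $a$ is monotone along any segment ending at $a$. Walking from $x_0\in\domain{Q}$ toward $a$ forces a first drop of this count at a bisector of $a$ with some $b\in Q$, and your bookkeeping (no $a$--$c$ bisector with $c\notin Q$ can be crossed toward $a$; at the crossing all $c\notin Q$ remain strictly farther than $a=b$) correctly shows that just past this point we are still in the interior of $\domain{Q}$ with $b$ now the farthest, contradicting the bowl property.

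Both approaches are valid. The paper's is considerably shorter and transfers verbatim to arbitrary hyperplane arrangements, while yours is more hands-on and makes the underlying Voronoi geometry explicit; in particular your monotonicity lemma is a pleasant self-contained fact that does not appear in the paper's treatment.
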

\begin{proof}
  Let $\gamma$ be a bowl at arbitrary depth, and let $P$ be its lid.
  Every other hyperplane that contains a facet of $\gamma$
  bounds $\gamma$ from below.
  The top facet of $\gamma$ is the only part of $P$ that is above all of these hyperplanes;
  that is: all other parts of $P$ are below at least one of the other hyperplanes. 
  This implies that every other bowl with lid $P$ has at least one other hyperplane above it,
  and is thus of depth at least $2$.

  Now assume $\gamma$ is at depth $1$.
  If there were another bowl $\gamma'$ with lid $P$, 
  then the above argument would yield that all other bowls are at depth
  at least $2$, contradicting our assumption on $\gamma$.
  Thus $\gamma$ has to be the unique bowl with lid $P$.
\end{proof}

With this lemma, we are ready to state and prove the main combinatorial insight
that motivates our algorithm.
In a nutshell, it says that the first-generation cells form clusters
without interior vertices.
In $\Rspace^2$, this is equivalent to saying that these clusters
have outer-planar $1$-skeletons.
\begin{theorem} \label{thm:orderk-vertices}
  Let $A \subseteq \Rspace^d$ be locally finite and $k \geq 2$.
  Then every vertex in $\Del{k}{A}$ is vertex of some $d$-cell
  of generation $g \geq 2$.
\end{theorem}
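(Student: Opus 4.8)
The plan is to pass to the dual hyperplane arrangement $\Arr{A}$ and argue there. Fix a vertex $Q$ of $\Del{k}{A}$; by Proposition \ref{prop:RhomboidTiling} it is dual to a chamber $\gamma$ of $\Arr{A}$ at depth $k$, so the $k$ hyperplanes of $\Ain(\gamma) = Q$ lie above $\gamma$ (call them the \emph{upper} hyperplanes) and the rest lie below. A $d$-cell of $\Del{k}{A}$ containing $Q$ is a slice of a full-dimensional rhomboid, hence dual to a vertex $v$ of $\gamma$; combining Lemma \ref{lem:xinxon} with the duality, the generation of that cell equals the number $m_v$ of upper hyperplanes through $v$, since the depth of $v$ is $k - m_v$. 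Because $A$ is in general position the arrangement is simple, so each of the $d+1$ hyperplanes through $v$ carries a facet of $\gamma$ incident to $v$; thus $m_v$ is exactly the number of upper facets at $v$, and the cell is $d$-dimensional precisely when $1 \le m_v \le d$. Restating the theorem: I must produce a vertex $v$ of $\gamma$ lying on at least two upper facets and at least one lower facet, that is, with $2 \le m_v \le d$. This already explains why $d \ge 2$ is needed, as for $d = 1$ the admissible range is empty.

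The argument then splits according to whether $\gamma$ is a bowl. First I would rule out bowls at depth $k \ge 2$. If $\gamma$ is a bowl it has a single upper facet, so every vertex satisfies $m_v \le 1$ and $Q$ would be a counterexample; hence it suffices to show that no chamber of depth $\ge 2$ is a bowl. For this I would observe that every hyperplane $H_a$ of the arrangement is the lid of a depth-$1$ bowl, namely the chamber with $\Ain = \{a\}$: this chamber is nonempty because the ordinary Voronoi domain of $a$ is nonempty, it sits at depth $1$, and its upper boundary is the single hyperplane $H_a$. Lemma \ref{lem:pbad} then says $H_a$ is the lid of no other chamber. Consequently a bowl of depth $k \ge 2$, whose lid is some $H_a$, cannot exist, since $H_a$ is already the lid of the depth-$1$ bowl of $a$. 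This is the step where Lemma \ref{lem:pbad} does the real work.

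It remains to treat the generic case, where $\gamma$ is not a bowl and therefore has at least two upper facets; here I must extract a vertex with $2 \le m_v \le d$. The idea is to project to $\Rspace^d$: the chamber $\gamma$ projects onto the order-$k$ domain $\domain{Q}$, and its upper facets project to the cells of the farthest-point Voronoi diagram of $Q$ restricted to $\domain{Q}$ --- the facet on $H_a$ is the locus where $a$ is farthest among $Q$, equivalently where $a$ is the $k$-th nearest point of $A$. Since $\gamma$ is not a bowl, at least two such cells are present, and because farthest-point Voronoi cells are unbounded cones, the boundary separating two of them cannot be confined to the interior of $\domain{Q}$ and must reach $\partial\domain{Q}$. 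A point where this separating boundary meets $\partial\domain{Q}$ lies on at least two upper facets and on at least one lower facet; refining to a vertex $v$ of $\gamma$ incident to that face yields $2 \le m_v \le d$, and hence a generation-$g$ cell with $g \ge 2$ through $Q$, as required.

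The main obstacle I anticipate is exactly this last step: turning ``the separating boundary reaches $\partial\domain{Q}$'' into an honest vertex of $\gamma$ with $2 \le m_v \le d$, uniformly in the dimension and without assuming $\gamma$ is bounded. One must argue that the crease between two upper facets actually meets the rim where the upper and lower boundaries join, and that some vertex of that face still carries a lower facet, so that $m_v$ does not jump all the way to $d+1$, which would correspond to the irrelevant bottom vertex of a rhomboid. Handling unbounded domains $\domain{Q}$, where the rim is no longer a sphere, would require either a compactification or an argument via recession cones; I would let the cone structure of the farthest-point diagram carry this, since its creases are themselves unbounded and hence must exit any region of finite diameter along the directions in which $\domain{Q}$ is bounded.
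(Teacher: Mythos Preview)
Your reduction to the dual arrangement and your bowl argument---every hyperplane is the lid of a depth-$1$ bowl, so by Lemma~\ref{lem:pbad} no chamber at depth $\geq 2$ is a bowl---match the paper exactly.

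Where you diverge is the non-bowl case, and you make it harder than necessary. You aim to locate a vertex $v$ of the arrangement on $\partial\gamma$ with $2 \le m_v \le d$ directly, via the farthest-point Voronoi diagram of $Q$ restricted to $\domain{Q}$, and you correctly flag the difficulty of pushing a crease between two upper facets all the way out to the rim. The paper sidesteps this: once $\gamma$ has at least two upper facets, connectedness of the upper boundary (it is the graph of a function over $\domain{Q}$) already forces two of them to share a $(d-1)$-face. That face is dual to a $2$-rhomboid whose bottom vertex is $Q$, and any $(d+1)$-rhomboid containing this $2$-rhomboid as a face then has $Q$ at generation at least~$2$. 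No projection to $\Rspace^d$, no farthest-point diagram, no case split on whether $\domain{Q}$ is bounded.

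Your worry that $m_v$ might jump to $d+1$---equivalently, that the $(d+1)$-rhomboid so obtained might have $Q$ as its bottom vertex, so that its depth-$k$ slice is a point rather than a $d$-cell---is a genuine subtlety. The paper's proof asserts only ``generation at least $2$'' and does not explicitly rule out generation $d+1$, so in this respect you are being more scrupulous than the paper itself. But your proposed remedy, routing through the cone structure of farthest-point Voronoi cells and their behaviour on unbounded $\domain{Q}$, is considerably heavier than the one-line connectedness argument the paper actually uses to reach the $2$-rhomboid; the obstacle you anticipate is real in your formulation but largely self-inflicted.
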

\begin{proof}
  In the unweighted setting, each hyperplane is tangent to the paraboloid $\Paraboloid$ and
  contains a facet of the unique depth-$0$ chamber.
  Thus, each hyperplane is the lid to a chamber at depth $1$.
  As this is true for every hyperplane,
  all chambers of depth $2$ or higher have no lids by Lemma \ref{lem:pbad}.
  This means that any chamber of depth at least $2$ has at least two upper facets.
  Because the upper boundary is connected, there are two upper facets that meet
  in a $(d-1)$-face, the dual rhomboid of this face has dimension $2$,
  and its bottom vertex is dual to the chamber.
  Thus we can obtain this vertex, $v$, knowing the other three vertices
  of the $2$-rhomboid.

  Any $2$-rhomboid is a face of some $(d+1)$-dimensional rhomboid, $\rho$, 
  which thus contains $v$ at generation at least $2$,
  i.e.\ $v$ has depth at least $\card{\Ain(\rho)} + 2$.
  Knowing $\Ain(\rho)$ and $\Aon(\rho)$,
  we obtain this vertex via Equation \eqref{eqn:newvx}.
\end{proof}

\section{Algorithm}
\label{sec:algorithm}

We outline our algorithm in this section;
its correctness follows from the results of the previous sections.
We compute the Delaunay mosaics one by one in sequence of increasing order.
For $\Del{1}{A}$, the vertex set is the set $A$ of input points.
Whenever we have the vertex set of $\Del{j}{A}$, we compute
its (triangulated) $d$-cells using an off-the-shelf algorithm for
weighted Delaunay triangulations. We use Lemma \ref{lem:identify_generation} to
identify the first-generation $d$-cells, while discarding all other cells.
From each first-generation cell, we obtain and save
the higher-generation $d$-cells and vertices defined by the same rhomboid.
These will appear in Delaunay mosaics of higher orders.
By Theorem \ref{thm:orderk-vertices}, once we have processed all
$\Del{j}{A}$ for $j < k$, we will have obtained the complete vertex set
of $\Del{k}{A}$ in the process, thus allowing our algorithm to
continue until we have all Delaunay mosaics up to the desired order.

Algorithm \ref{alg:okdel2} is a more formal write-up of the above outline,
and Figure \ref{fig:full_3d} visualizes the process.
A dimension-agnostic {\tt python} implementation and
a 2- and 3-dimensional {\tt C++} implementation using
CGAL \cite{cgal:eb-19b} are available at \cite{orderkgithub,rhomboidgithub}.
If we store all first-generation cells with their anchor vertices,
we can use this algorithm to implicitly construct the rhomboid tiling,
as done in \cite{rhomboidgithub}.

\begin{algorithm}
  \caption{computes the order-$k$ Delaunay mosaic of a finite set
    of (unweighted) points, $A \subseteq \Rspace^d$.
    We represent each $d$-cell of $\Del{k}{A}$ by the collection of 
    its combinatorial vertices,
    stressing that these collections are sets and thus contain every
    combinatorial vertex only once.
    Duplicity is avoided by checking before adding.
    The locations of the combinatorial vertices are the barycenters of their points,
    and the cell is the convex hull of these locations.

    While the software for computing the weighted Delaunay mosaic may return
    all cells triangulated, our algorithm outputs the (non-triangulated) cells
    of the order-$k$ Delaunay mosaic.
    We recall that in $d \geq 3$ dimensions such non-simplicial cells
    appear generically for $k \geq 2$.}
  \label{alg:okdel2}
  \begin{algorithmic}
  \nolinenumbers
    \State $\vx{\Del{1}{A}} := A$
    \For{$j$ {\bf from} $1$ {\bf to} $k$}
      \State\Comment{Compute the location and weight of each combinatorial vertex}
      \ForAll{$v \in \vx{\Del{j}{A}}$}
        \State $loc(v) := \frac{1}{j} \sum_{a \in v} a$
        \State $wt(v)  := \norm{\frac{1}{j}\sum_{a\in v} a}^2 - \frac{1}{j}\sum_{a \in v} \norm{a}^2$
      \EndFor
      \State\Comment{Get the (triangulated) cells of the order-$j$ Delaunay mosaic}
      \State $D$ := \texttt{weightedDelaunay}($loc$, $wt$)
      \State\Comment{Infer vertices and higher-generation cells of later Delaunay mosaics}
      \ForAll{$d$-simplices $\sigma$ in $D$}
        \State\Comment{Check whether the generation of $\sigma$ is 1 via Lemma \ref{lem:identify_generation}}
        \State\Comment{We already obtained higher-generation cells of $\Del{j}{A}$ earlier.}
        \If{$\card{\bigcap \vx{\sigma}} = j-1$} 
          \State Add $\sigma$ to $\Del{j}{A}$
          \State\Comment{Get $\Ain(\rho)$ and $\Aon(\rho)$ via Lemma \ref{lem:xinxon}}
          \State $\Ain(\rho) := \bigcap \vx{\sigma}$
          \State $\Aon(\rho) := \bigcup \vx{\sigma} \setminus \Ain(\rho)$
          \For{$g$ {\bf from} $2$ {\bf to} $d$}
            \State\Comment{Get the generation-$g$ cell, $\sigma'$, of the rhomboid of $\sigma$, via Equation \eqref{eqn:newvx}}
            \State $\vx{\sigma'} := \{\Ain(\rho) \cup Q \mid Q \in \Aon(\rho), \card{Q} = g\}$
            \State Add all $v \in \vx{\sigma'}$ to $\vx{\Del{j+g-1}{A}}$
            \State Add $\sigma'$ to $\Del{j+g-1}{A}$
          \EndFor
        \EndIf
      \EndFor
    \EndFor
    \State\Return $\vx{\Del{k}{A}}$, $\Del{k}{A}$
  \end{algorithmic}
\end{algorithm}

\medskip
To get a handle on the runtime of the algorithm, we consider the two steps
used to compute the order-$k$ Delaunay mosaic after finishing the construction
of the first $k-1$ mosaics.
The first step is geometric and invokes the black-box algorithm
to construct the weighted Delaunay mosaic from which we get vertices and cells
of (unweighted) higher-order Delaunay mosaics.
The runtime of this step depends on the runtime of the black box algorithm,
which in many cases is output-dependent.
The second step is combinatorial and determines, for each output simplex
from the first step, whether it is first generation,
in which case it is a genuine cell of the mosaic.
Assuming constant dimension, $d$, identifying whether an order-$(k-1)$ cell
is of first generation and, in this case, obtaining the higher-generation cells
takes time $O(k)$.
Thus, for a given $k$, the combinatorial step takes time $O(k C_k)$,
in which $C_k$ is the number of $d$-cells of $\Del{k}{A}$.
With each vertex being represented as a $k$-tuple of points,
this is linear in the output size,
assuming we store each cell naively as a set of its vertices. 
If the runtime of each black-box invocation were linear in the output size,
the total runtime for producing the first $k$ higher-order Delaunay mosaics
would thus be linear in the output size as well.

In practice, it is more efficient to store a cell as a set
of pointers or indices to its vertices, only requiring space $O(kV_k + C_k)$,
with $V_k$ denoting the number of vertices of $\Del{k}{A}$.
Using this representation, the combinatorial step is not linear in the output
size unless the number of cells of $\Del{k}{A}$ is linear
in the number of vertices.

\section{Experimental Results}
\label{sec:experimental}  

In $2$ dimensions, the number of cells in the (order-$1$) Delaunay mosaic
is always linear in the number of input points,
while in $d \geq 3$ dimensions, the size of the mosaic depends on the input set
itself---and not just its cardinality---and ranges from
$\Omega(n)$ to $O(n^{\ceil{d/2}})$ \cite{mcmullen1970maximum}.
The asymptotic worst case is realized by points located on
the moment curve, $(t, t^2, \dots, t^d)$ with $t \in \Rspace$,
while e.g.\ uniformly sampled points within a sphere have expected linear
size \cite{dwyer1991higher}, as do uniformly sampled points on a
convex polytope in $\Rspace^3$ \cite{golin2003average}.
Under appropriate sampling conditions for points on a smooth surface,
the size of the mosaic is $O(n \log n)$ \cite{attali2003complexity}.

\medskip \noindent
\emph{Size in $3$ dimensions.}
To shed light on the size range of order-$k$ Delaunay mosaics,
we compute them for a few $3$-dimensional point sets relevant to these bounds.
Note that for order-$k$ Delaunay mosaics the number of vertices varies as well.
Figure \ref{fig:comparison} shows the numbers of vertices and $3$-dimensional cells
for all higher-order Delaunay mosaics of four sets of size $n=200$ each:
points on the {\tt moment curve},
points sampled on the {\tt torus} (with major radius $1$ and minor radius $0.5$
obtained by uniformly sampling the angles of its parametrization), 
points uniformly sampled inside the {\tt unit ball},
and a point set in convex position forming a {\tt polytope}
(obtained by uniformly sampling points inside
a ball and randomly choosing $200$ vertices of the convex hull).
\begin{figure}[htb]
  \centering
  \begin{subfigure}[t]{0.48\textwidth}
    \centering
    \includegraphics[width=\textwidth]{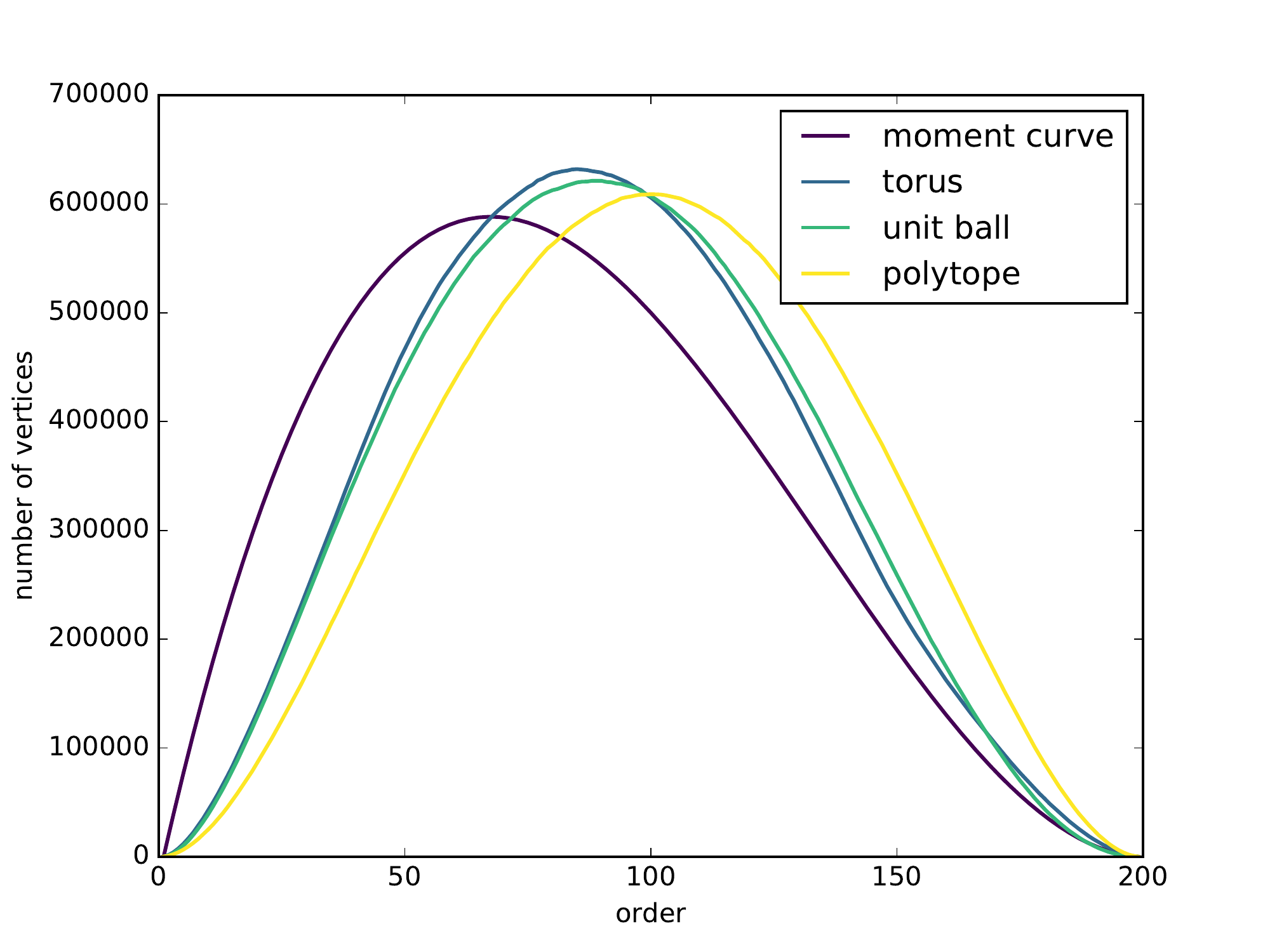}
  \end{subfigure}
  ~ 
  \begin{subfigure}[t]{0.48\textwidth}
    \centering
    \includegraphics[width=\textwidth]{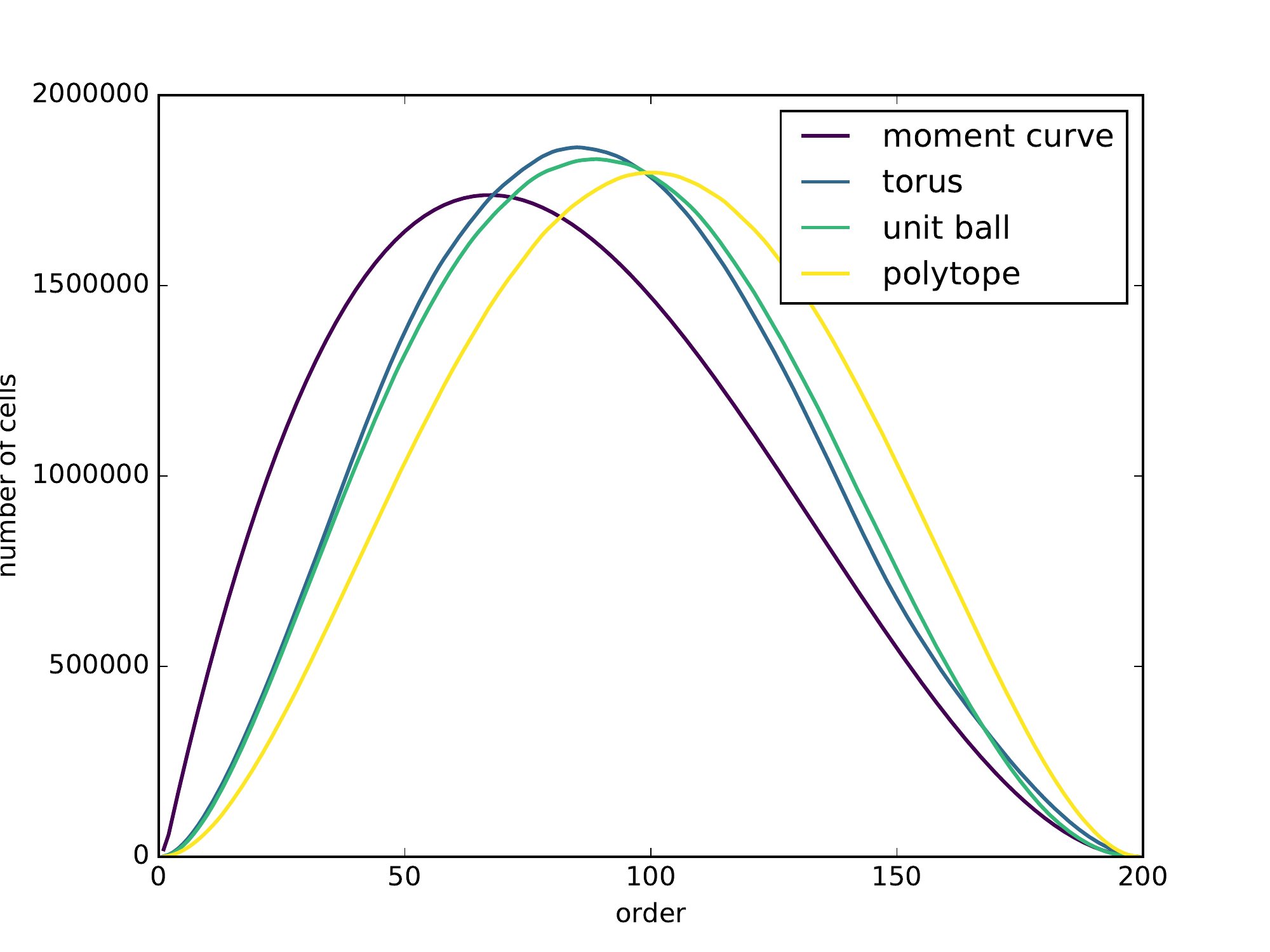}
  \end{subfigure}
  \caption{Number of vertices (\emph{left})
    and of $3$-dimensional cells (\emph{right})
    in the order-$k$ Delaunay mosaics of four sets with $n = 200$
    points in $\Rspace^3$ each.}
  \label{fig:comparison}  
\end{figure}
The plots of vertex numbers and cell numbers generally
resemble each other, with roughly three times as many cells as vertices.
Other than in Figure \ref{fig:comparison}, we therefore omit the information
about the vertices and show only the plots for the cells.
The {\tt moment curve} and {\tt polytope} sets are both in convex position.
Nevertheless, the size of the mosaic for the moment curve grows large
faster for small $k$,
and reaches its peak at $k \approx n/3$,
while for the polytope the peak is at $k \approx n/2$.
Notice how a faster rise also goes along with an earlier decay.
This is a consequence of the fact that the total size of all
order-$k$ Delaunay mosaics together---or, equivalently of the rhomboid tiling---only
depends on the input size, $n$, and not on the relative position of the input points.

  \medskip \noindent
  \emph{Size increase for small order.}
Looking more closely at the growth for small $k$ relative to the input size,
we observe that
the {\tt polytope} and {\tt unit ball} exhibit linear growth
while the size of the mosaic seems to grow quadratically
for the {\tt moment curve}, see Figure \ref{fig:complexity}.
This is consistent with the bounds on first-order Delaunay mosaics mentioned earlier.
For the {\tt torus}, the size seems to grow slightly superlinearly,
which is again consistent with the $O(n \log n)$ bound for smooth surfaces
mentioned above.

\begin{figure}[htb]
  \centering
  \begin{subfigure}[t]{0.48\textwidth}
    \centering
    \includegraphics[width=\textwidth]{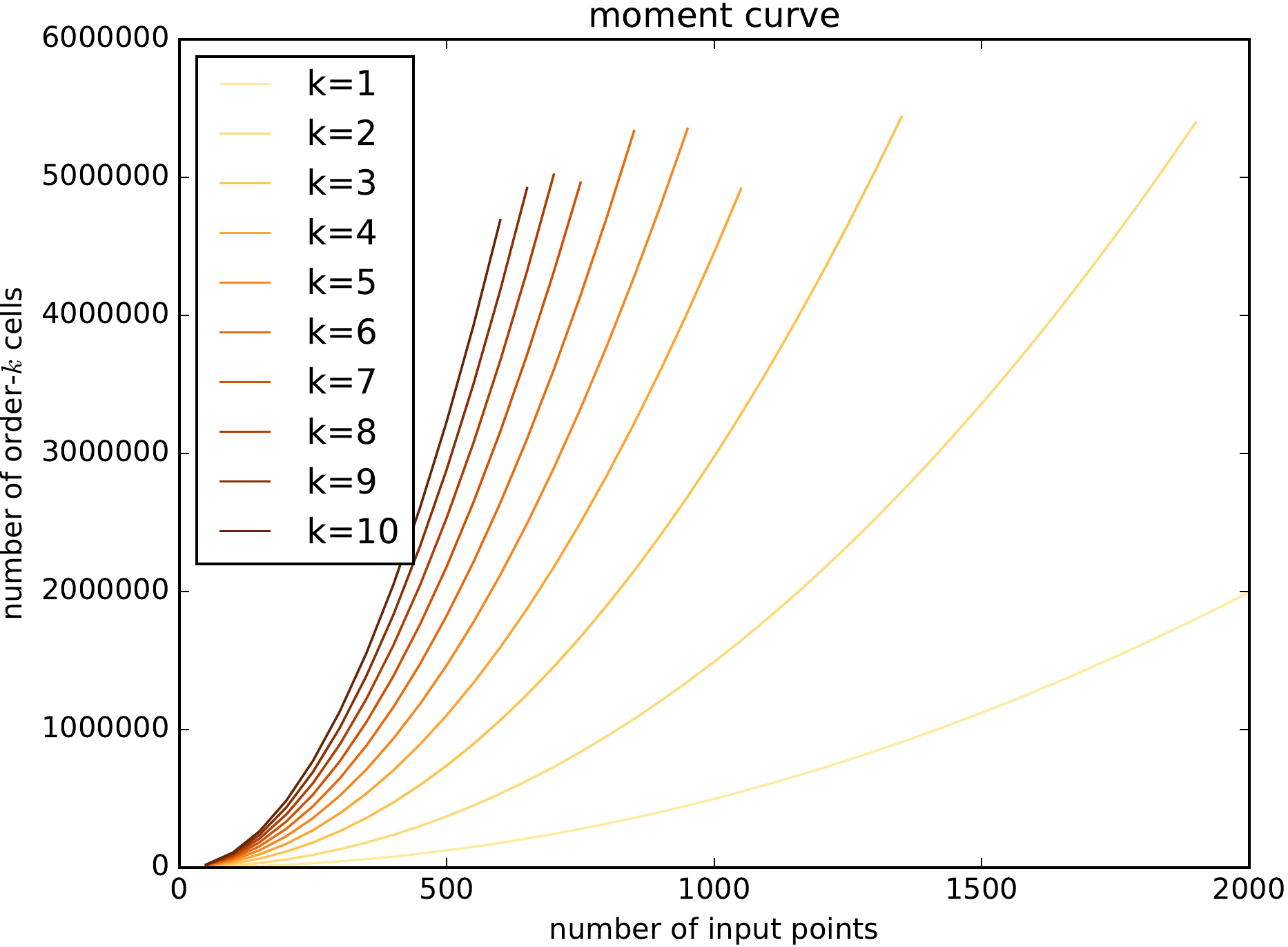}
  \end{subfigure}
  ~
  \begin{subfigure}[t]{0.48\textwidth}
    \centering
    \includegraphics[width=\textwidth]{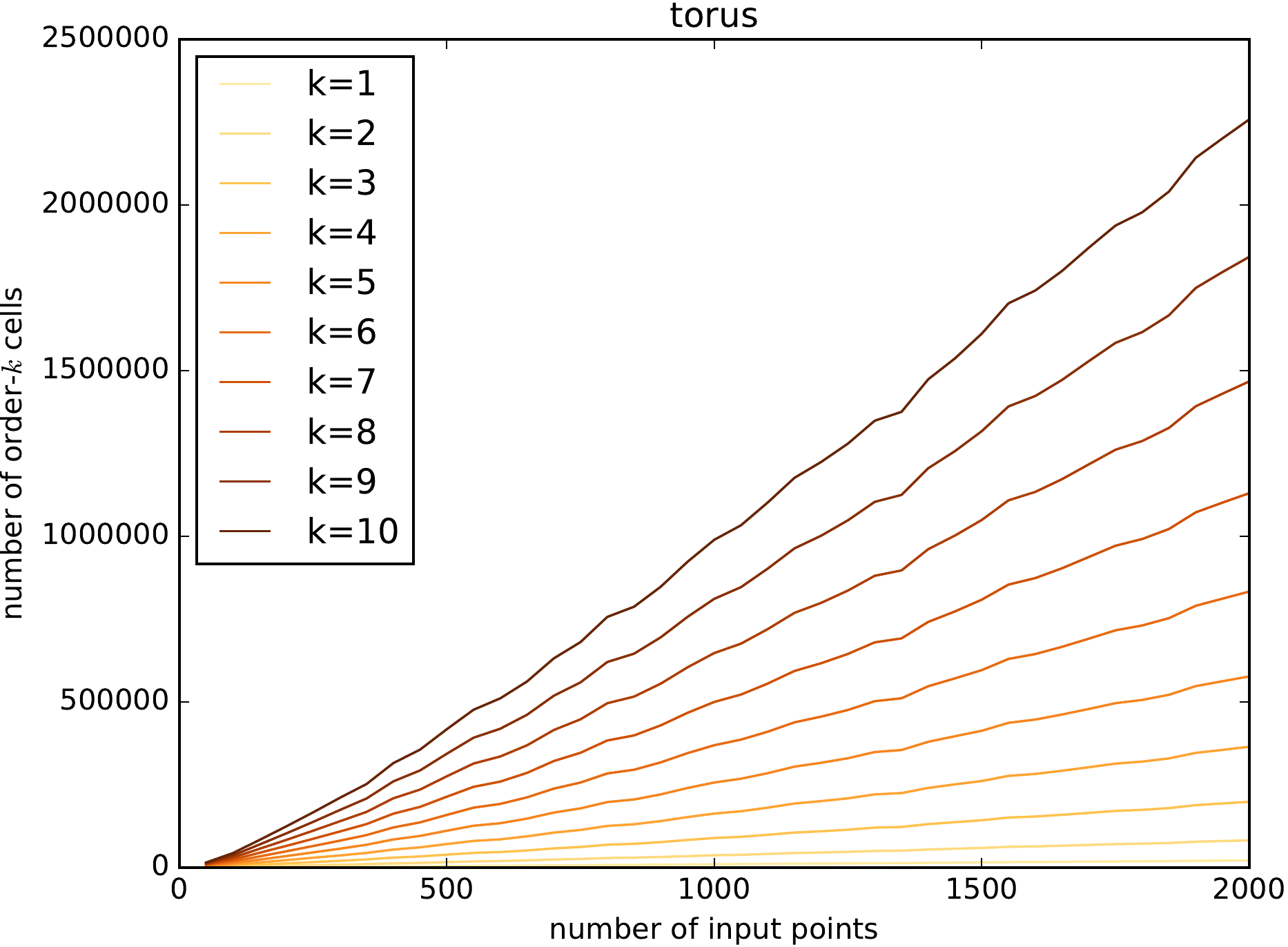}
  \end{subfigure}
  ~ 
  \begin{subfigure}[t]{0.48\textwidth}
      \centering
      \includegraphics[width=\textwidth]{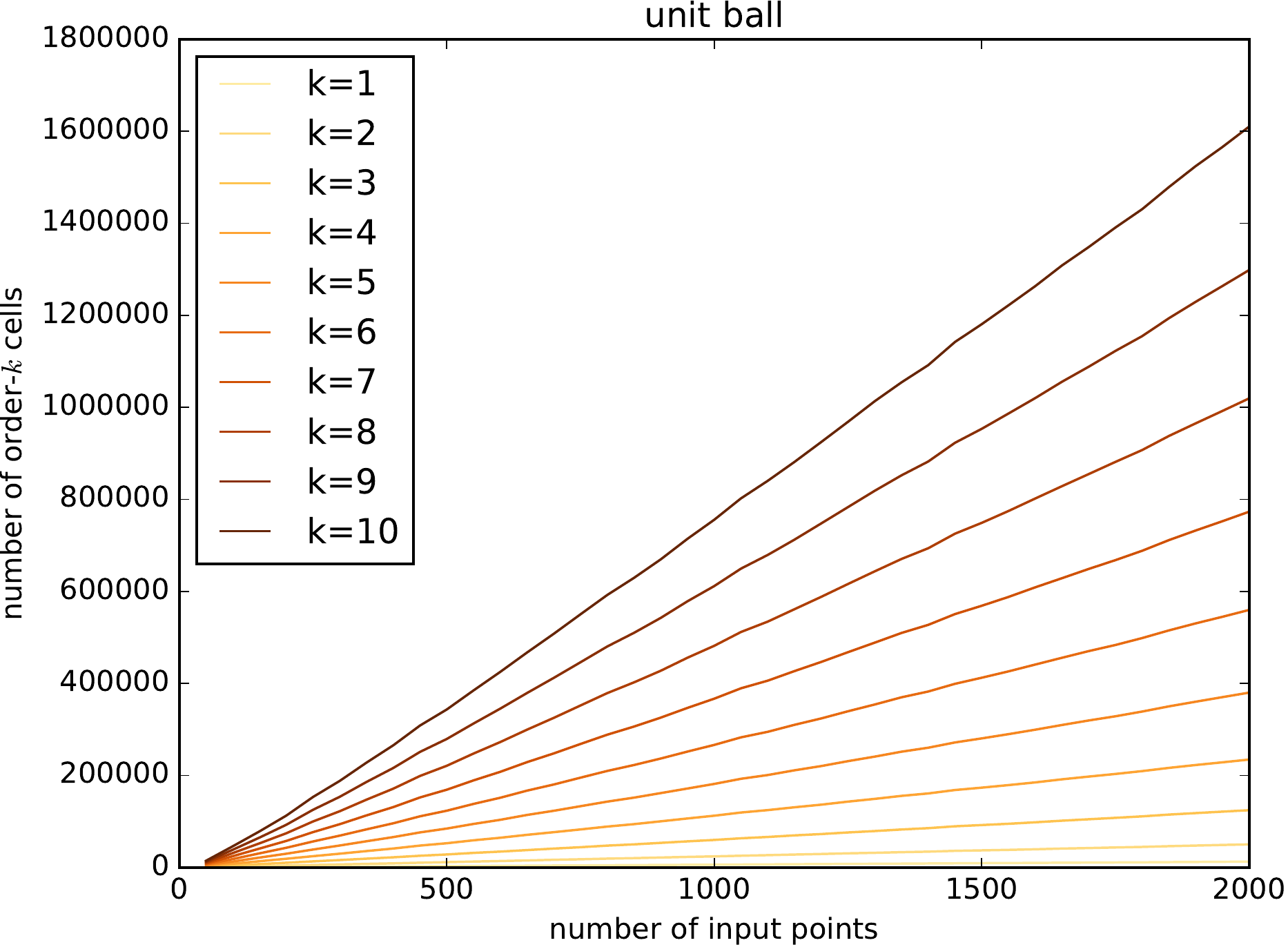}
  \end{subfigure}
  ~ 
  \begin{subfigure}[t]{0.48\textwidth}
      \centering
      \includegraphics[width=\textwidth]{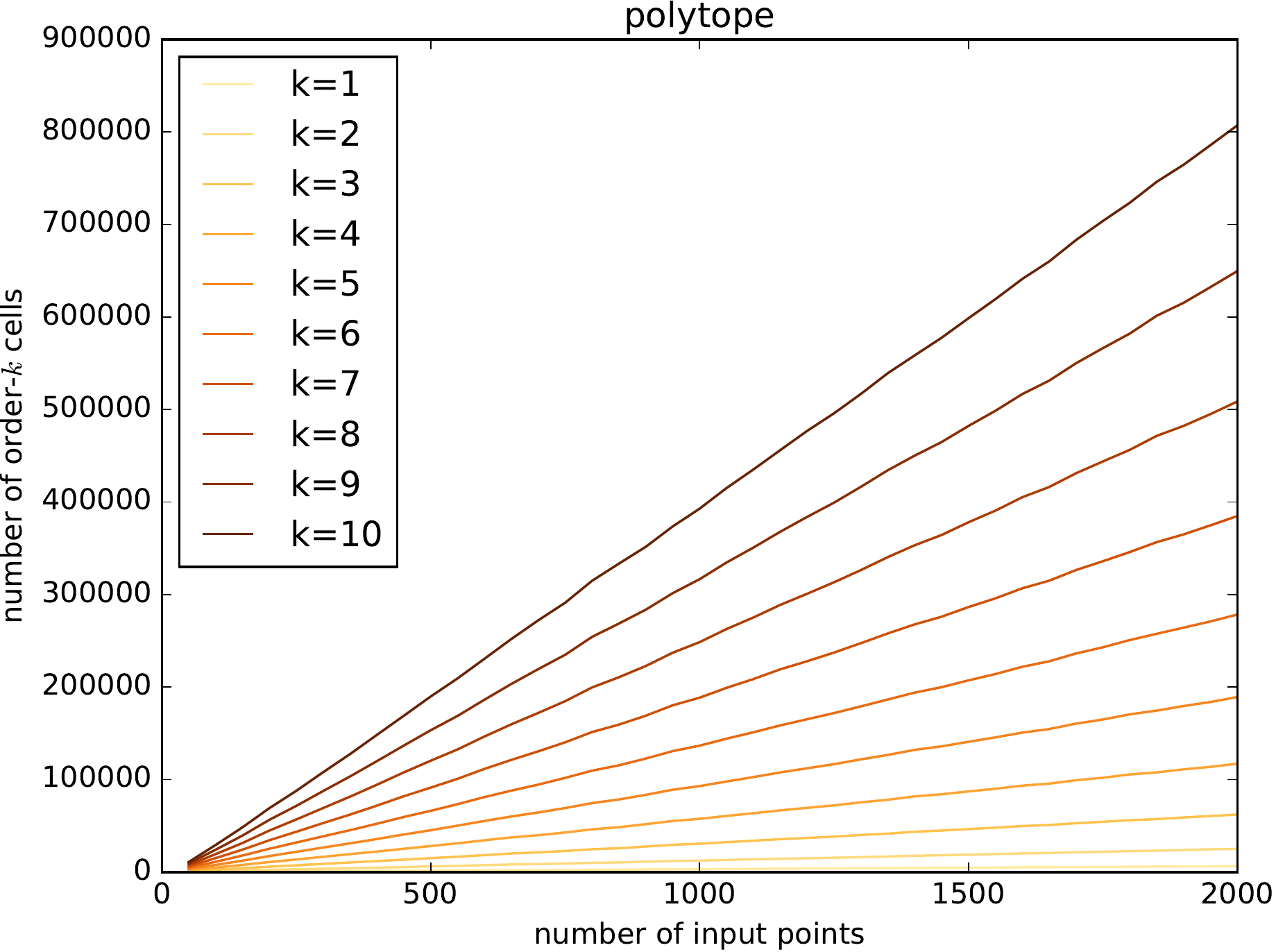}
  \end{subfigure}
  \caption{Number of cells in the order-$k$ Delaunay mosaics
    for small $k$ in relation to the input size,
    for various 3-dimensional point sets.}
  \label{fig:complexity}  
\end{figure}

\medskip \noindent
\emph{Variance.}
To probe whether the above figures are representative, 
we investigate the variance in number of cells
for the {\tt polytope} and the {\tt unit ball}.
As shown in Figure \ref{fig:variance},
the variance is particularly small for the {\tt polytope},
and it is considerably larger of the {\tt unit ball}.
Curiously, the variance dips at $k = n/2$.
\begin{figure}[hbt]
  \centering
  \begin{subfigure}[t]{0.48\textwidth}
    \centering
    \includegraphics[width=\textwidth]{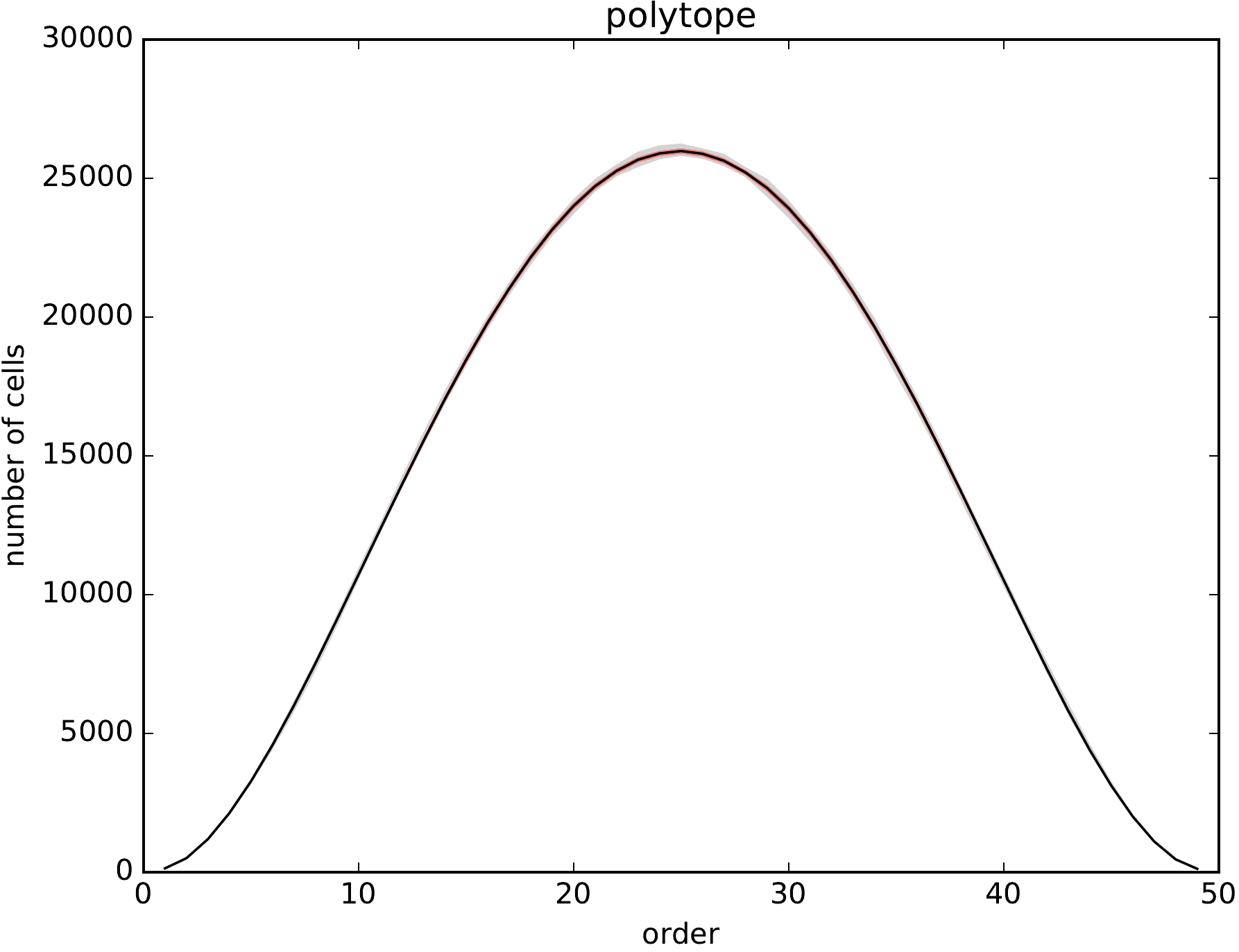}
  \end{subfigure}
  ~
  \begin{subfigure}[t]{0.48\textwidth}
    \centering
    \includegraphics[width=\textwidth]{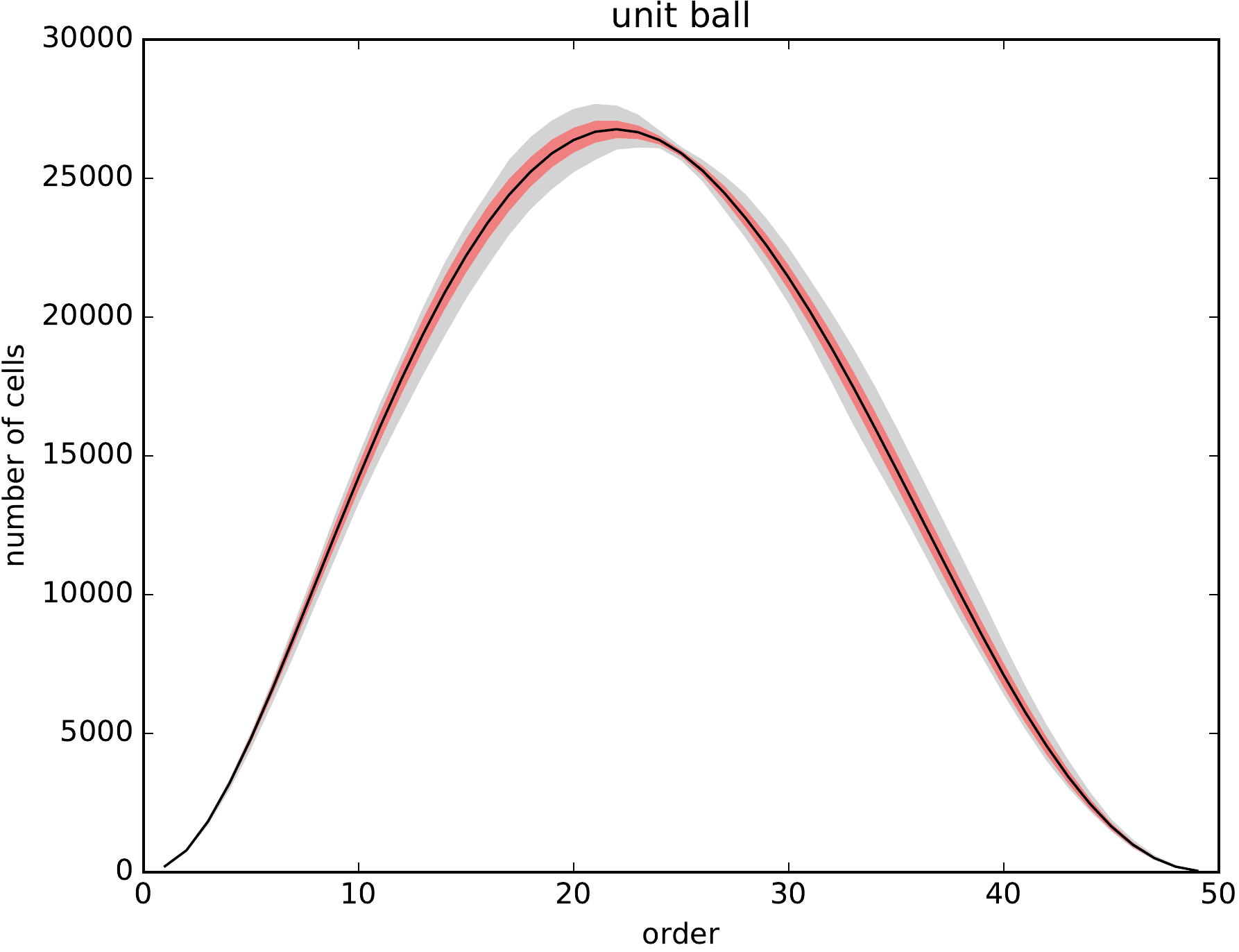}
  \end{subfigure}
  \caption{Variance of the number of $3$-dimensional cells in the
    order-$k$ Delaunay mosaics of randomly sampled points in convex position
    (\emph{left}) and in a unit ball (\emph{right}).
    The statistics of each plot are obtained from $30$ sets of $50$ points each.
    In \emph{black}: the mean;
    in \emph{red}: the range of one standard deviation around the mean;
    in \emph{grey}: the range between the minimum and maximum.}
  \label{fig:variance}
\end{figure}
\medskip \noindent

\medskip \noindent
\emph{Generations.}
We also investigate the distribution of cells of different generations.
All point sets exhibit a pattern similar to that in
Figure \ref{fig:generations_distribution},
with the fraction of first-generation cells decreasing
and the fraction of $d$-th-generation cells increasing as the order grows.
The change is most prominent for small and large $k$,
while the fractions remain almost constant in the range $k \approx n/2$,
provided $n$ is significantly larger than the dimension $d$.
\begin{figure}[hbt]
  \centering
  \begin{subfigure}[t]{0.48\textwidth}
    \centering
    \includegraphics[width=\textwidth]{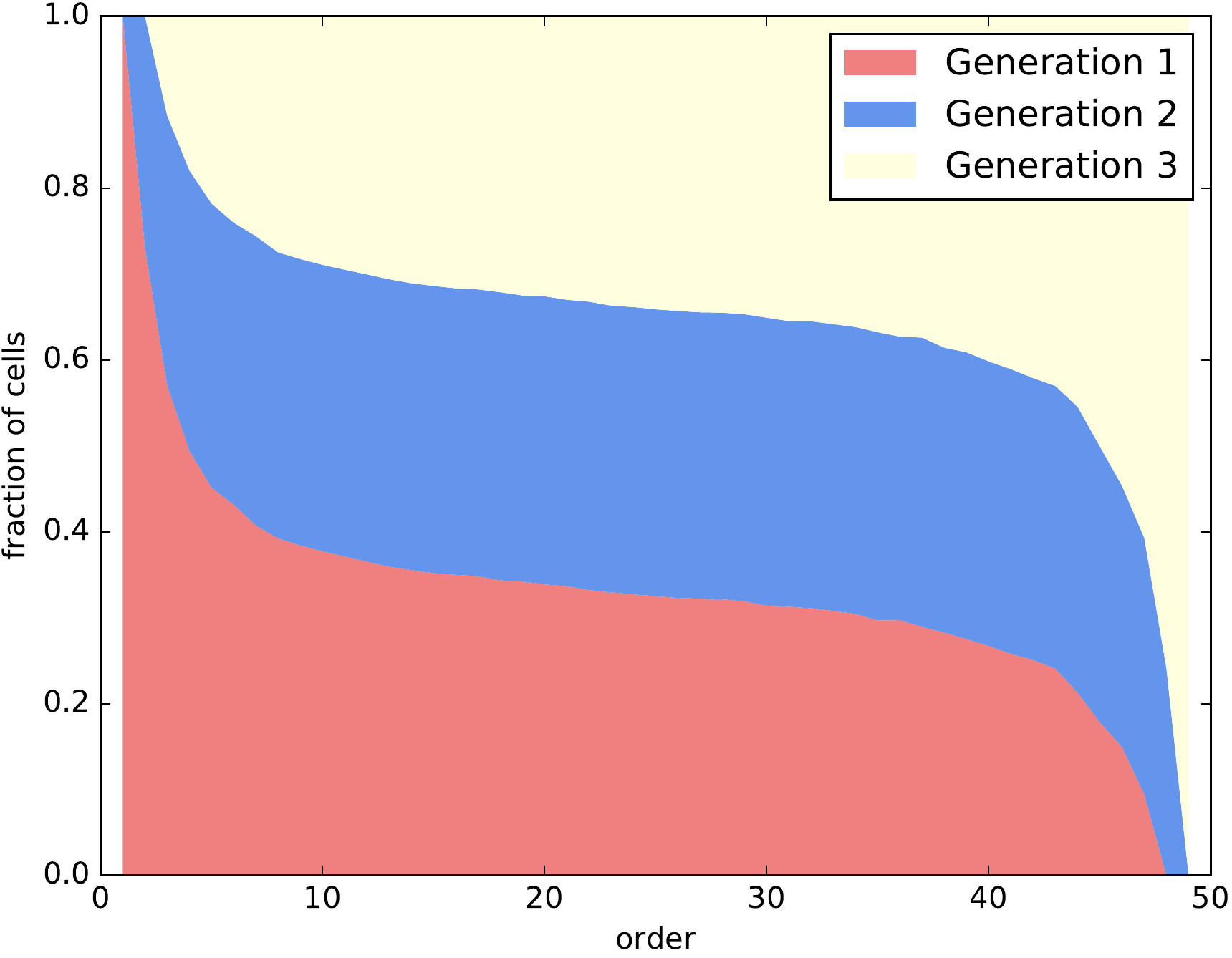}
  \end{subfigure}
  ~
  \begin{subfigure}[t]{0.48\textwidth}
    \centering
    \includegraphics[width=\textwidth]{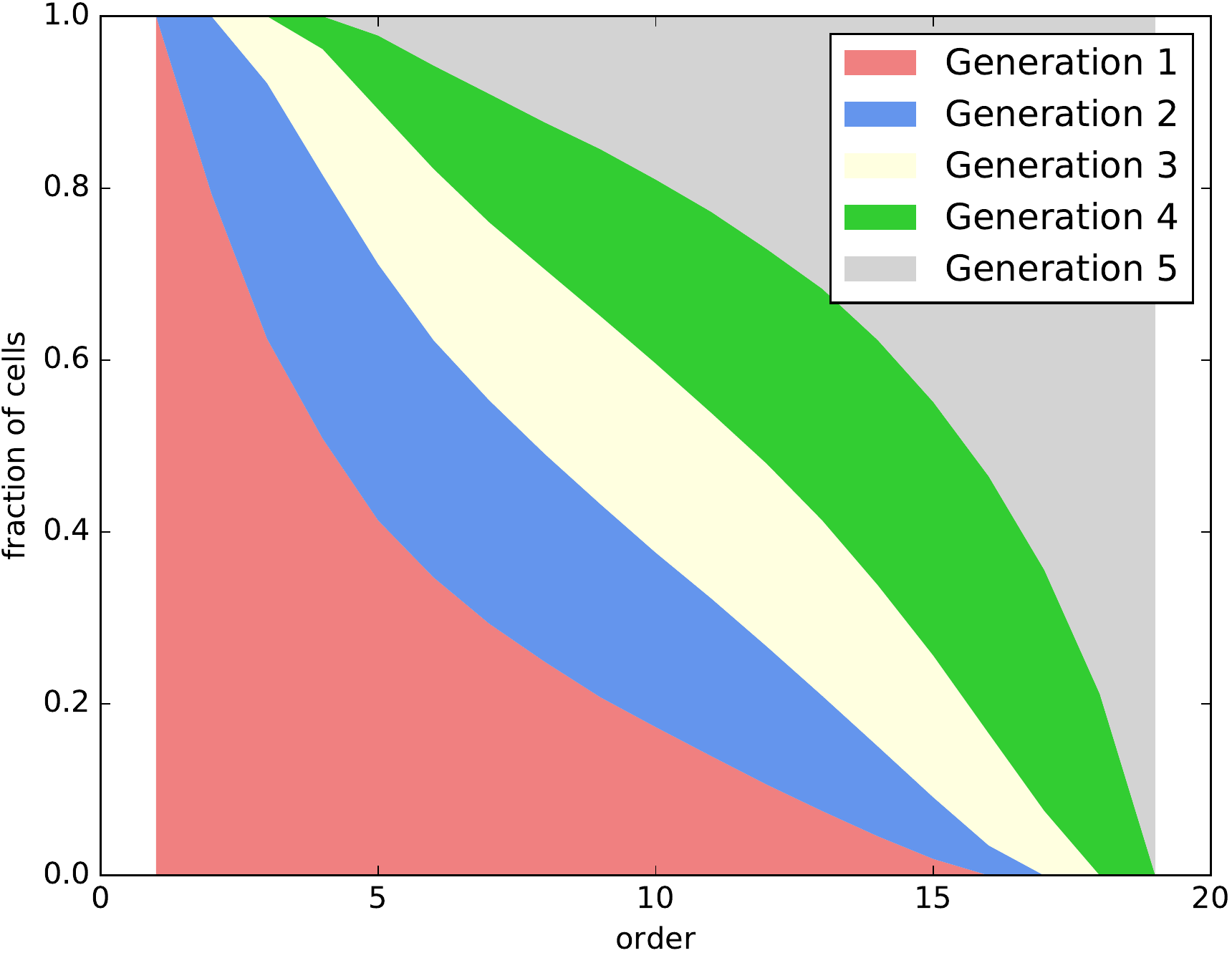}
  \end{subfigure}
  \caption{Fraction of cells of each generation in the order-$k$ Delaunay
    mosaic, for $50$ random points in the unit $3$-ball (\emph{left})
    and $20$ random points in the unit $5$-ball (\emph{right}).}
  \label{fig:generations_distribution}
\end{figure}

\medskip \noindent
\emph{Curse of dimensionality.}
Like many geometric structures, order-$k$ Delaunay mosaics are subject
to the dimensionality curse.
Figure \ref{fig:dimension_complexity} shows how the size of order-$k$
Delaunay mosaics behaves for point sets in different dimensions. 
\begin{figure}[hbt]
  \centering
  \begin{subfigure}[t]{0.48\textwidth}
    \centering
    \includegraphics[width=\textwidth]{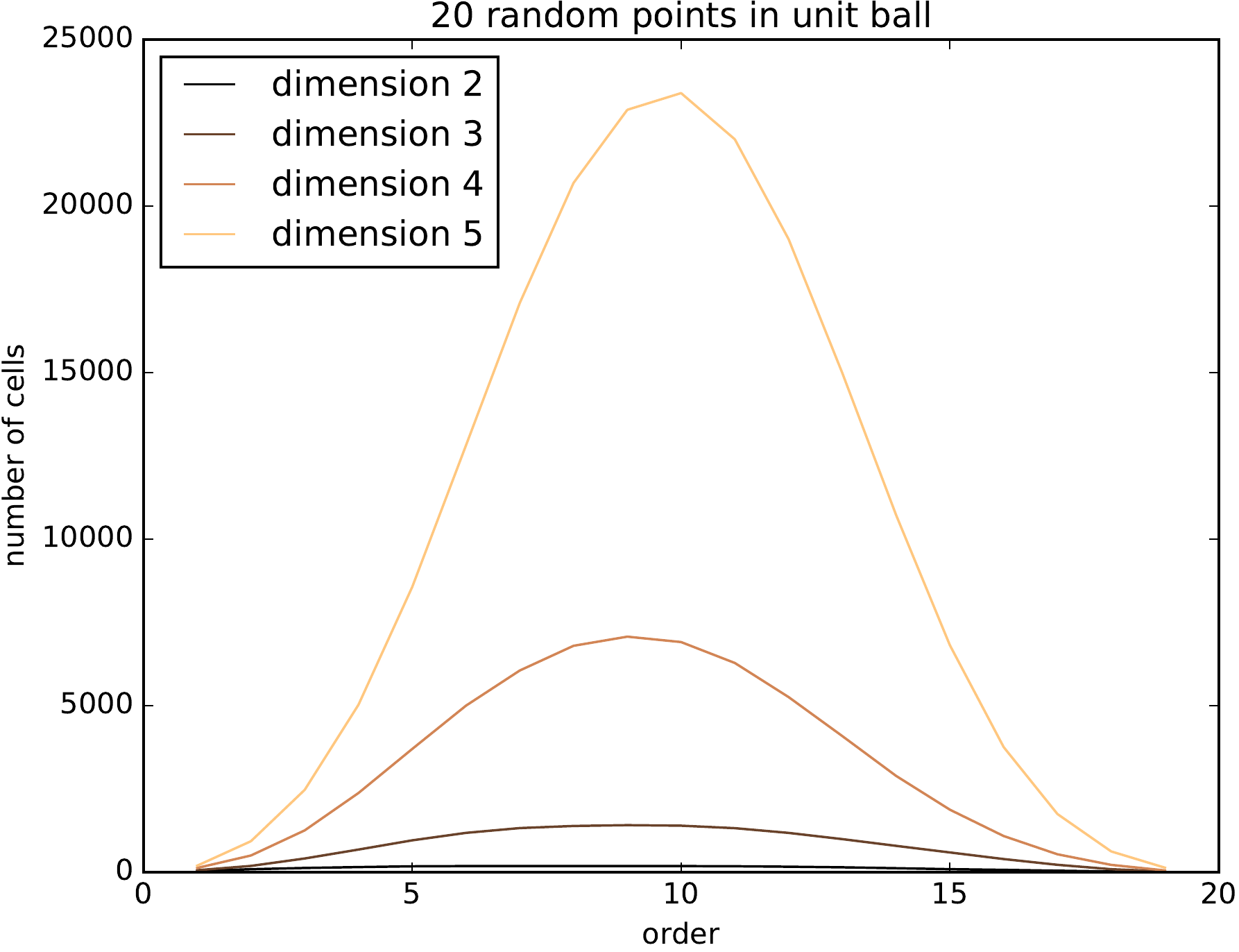}
  \end{subfigure}
  ~ 
  \begin{subfigure}[t]{0.48\textwidth}
    \centering
    \includegraphics[width=\textwidth]{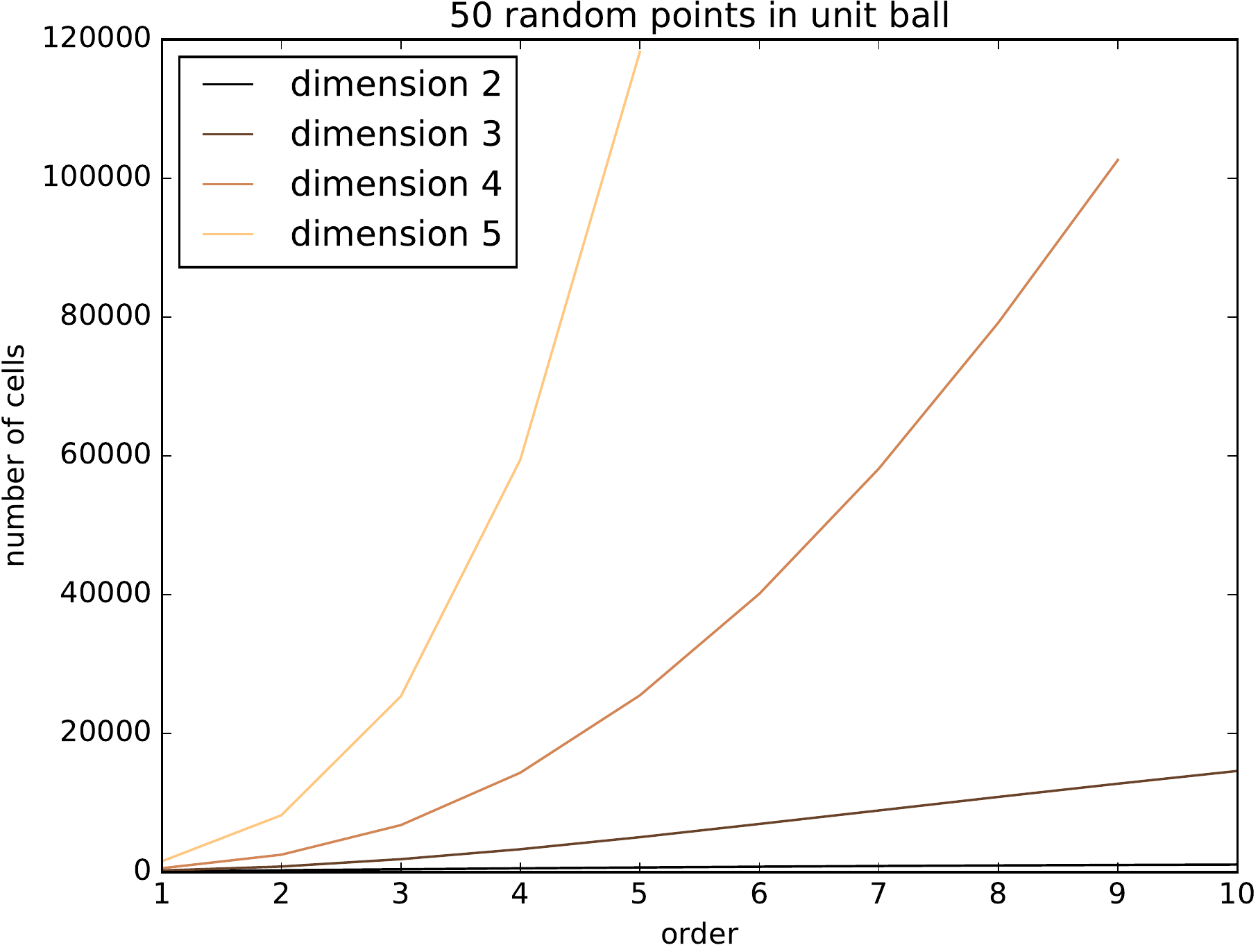}
  \end{subfigure}
  \caption{Number of $d$-cells in the order-$k$ Delaunay mosaic
    for $20$ points (\emph{left}) and $50$ points (\emph{right})
    randomly sampled in the unit ball for different dimensions $d$.}
  \label{fig:dimension_complexity}  
\end{figure}

\medskip \noindent
\emph{Vertex degrees.}
Order-$k$ Delaunay mosaics in $\Rspace^3$
exhibit an interesting distribution of vertex
degrees for random point sets; see Figure \ref{fig:degree_distribution}.
The distribution looks like the sum of two distributions---with the second
one only covering values $2$ modulo $3$---and is exhibited for all $k$ except
very small and very large ones.
We do not know the reason for vertices being frequently incident
to $5, 8, 11, \dots$ $d$-cells,
but suspect these numbers correspond to geometric configurations
of cells of different generations,
such as three octahedra sharing a common vertex with two tetrahedra.
\begin{figure}[hbt]
  \centering
  \begin{subfigure}[t]{0.48\textwidth}
    \centering
    \includegraphics[width=\textwidth]{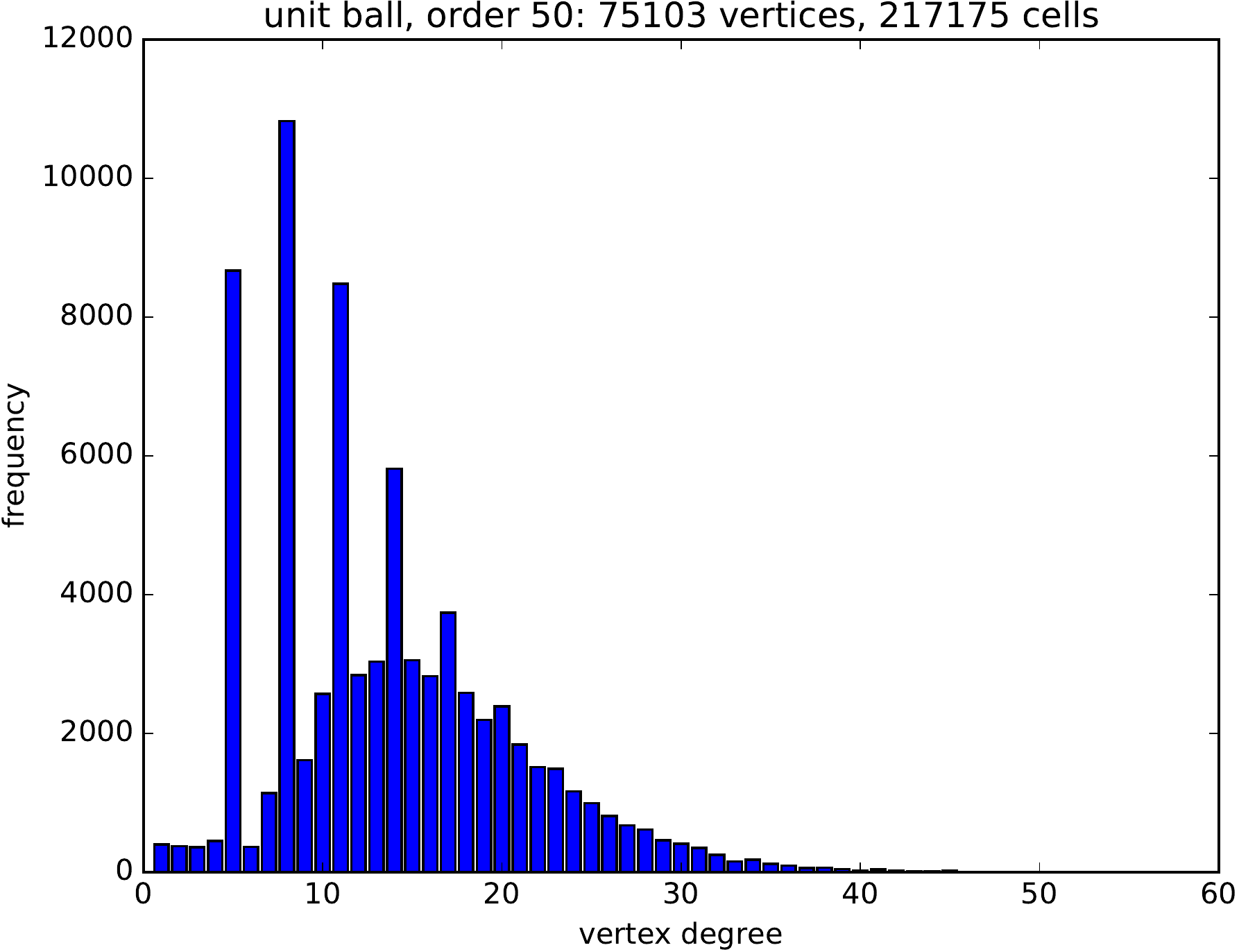}
  \end{subfigure}
  ~
  \begin{subfigure}[t]{0.48\textwidth}
    \centering
    \includegraphics[width=\textwidth]{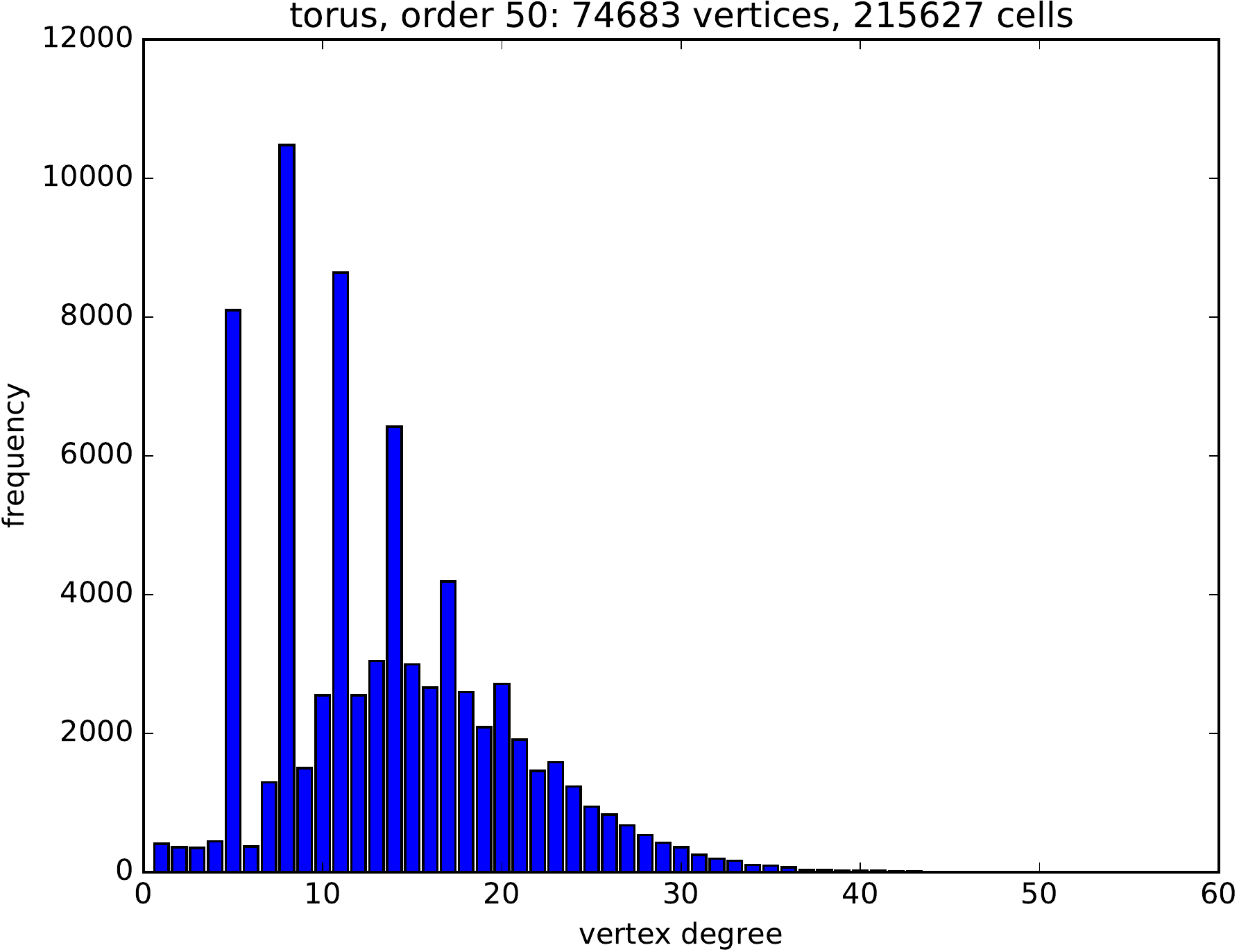}
  \end{subfigure}
  \caption{Vertex degree distribution in the order-$50$ Delaunay
    mosaic for $100$ points sampled in the unit ball (\emph{left})
    and on the torus (\emph{right}).}
  \label{fig:degree_distribution}
\end{figure}

\medskip \noindent
\emph{Clusters.}
First-generation cells of any order-$k$ Delaunay mosaic come in
clusters connected by shared facets.
We investigate the distribution of their sizes, leaving the discussion of
their potential algorithmic significance for later.
Figure \ref{fig:cluster_sizes} shows cluster size distributions
in $\Rspace^3$ for different orders.
For very small $k$, the distribution depends on how the points
are sampled, while for all other $k$, the cluster sizes seem
to follow an exponential distribution.
The decay rate increases with $k$ and seems to be linked to the fraction of
first-generation cells.
It culminates in all clusters being singletons for $k = n-3$.
For $k > n-3$, there are no more first-generation cells.
\begin{figure}[hbt]
  \centering
  \begin{subfigure}[t]{0.31\textwidth}
    \centering
    \includegraphics[height=3.6cm]{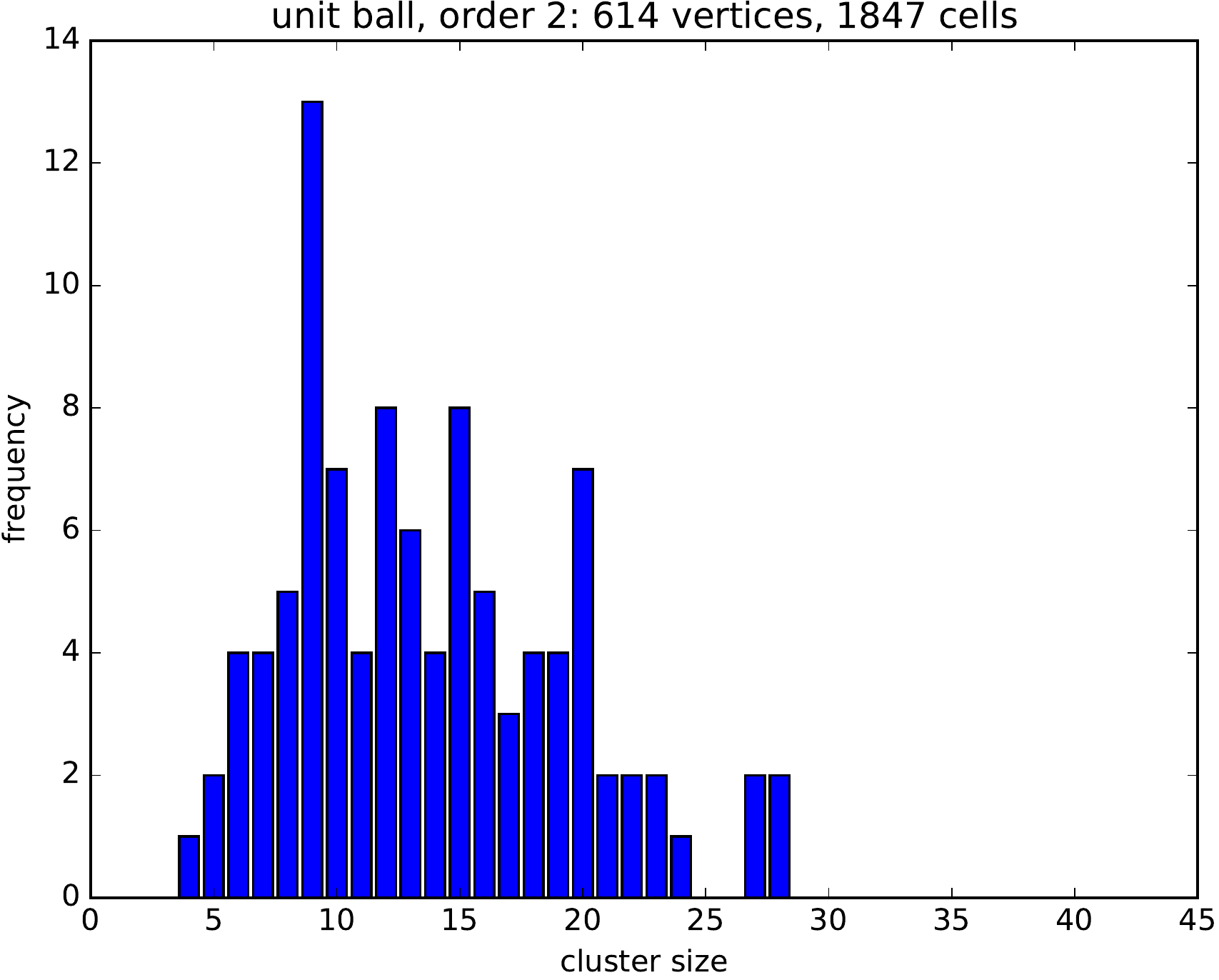}
  \end{subfigure}
  ~ 
  \begin{subfigure}[t]{0.31\textwidth}
    \centering
    \includegraphics[height=3.6cm]{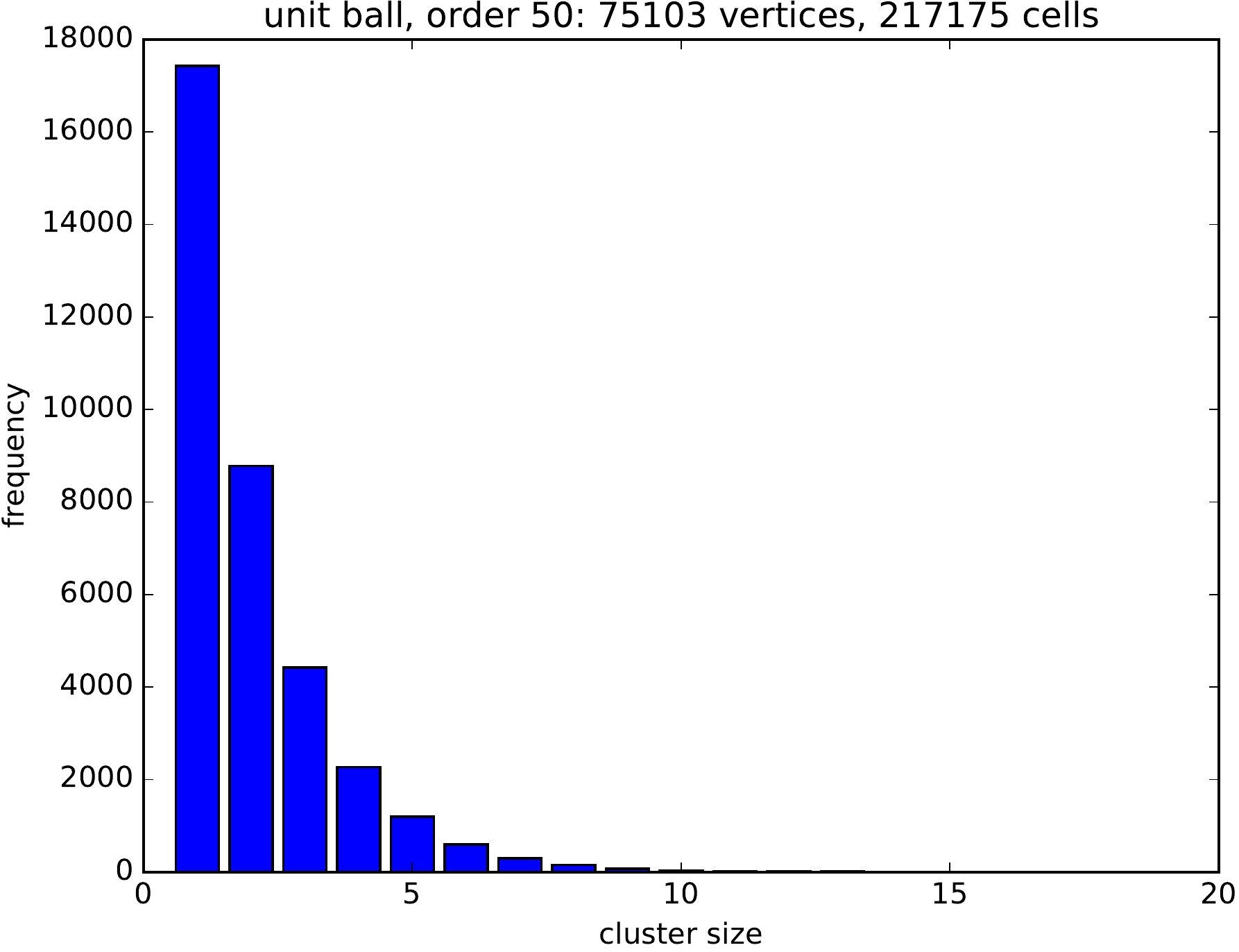}
  \end{subfigure}
  ~ 
  \begin{subfigure}[t]{0.31\textwidth}
    \centering
    \includegraphics[height=3.6cm]{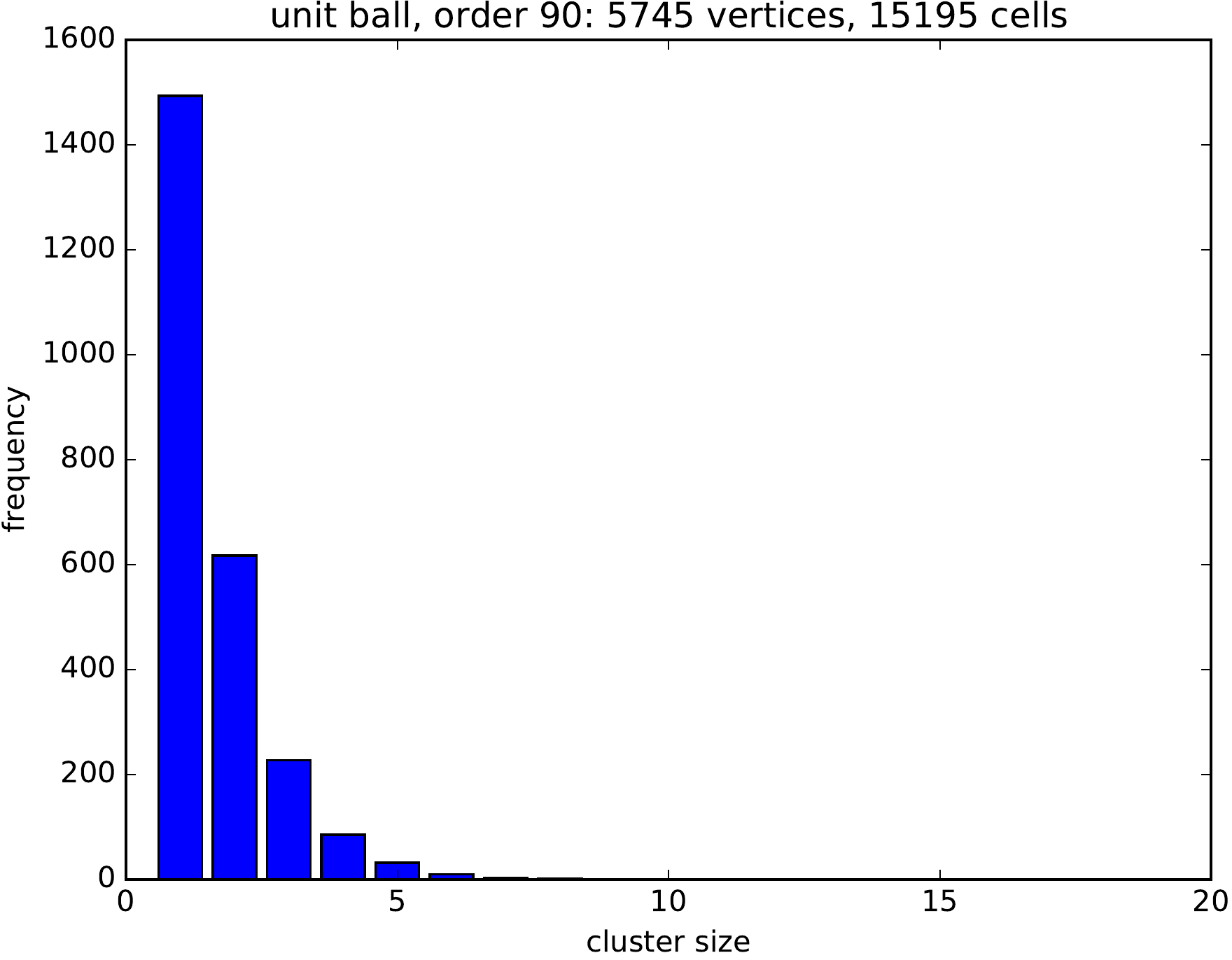}
  \end{subfigure}
  \caption{From \emph{left} to \emph{right}: distribution of cluster sizes
    in Delaunay mosaics of order $2$, $50$, and $90$
    for $100$ random points in the {\tt unit ball}.}
  \label{fig:cluster_sizes}  
\end{figure}

\section{Order-$k$ Alpha Shapes}
\label{sec:alphashapes}  

Beyond order-$k$ Delaunay mosaics, our algorithm can be extended
to compute \emph{order-$k$ alpha shapes}, as introduced in \cite{KrPo14}.
To this end, the rhomboid tiling
is endowed with a radius function on its rhomboids \cite{EdOs18}.
It is inherited by the Delaunay mosaics, which are slices of the rhomboid
tiling, and their sublevel sets with respect to this radius function
are complexes that geometrically realize the order-$k$ $\alpha$-shapes.
In this section, we recall the definition of the radius function
from \cite{EdOs18}, and present an efficient way of computing it.

\Skip{While the algorithm from Section \ref{sec:algorithm}
only explicitly computes top-dimensional rhomboids,
this radius function is defined on rhomboids of all dimensions.
The $3^\card{\xin{\rho}}$ faces of each rhomboid, $\rho \in \Rhomboid{\Xp}$, are easy to obtain: 
Each 3-partition of $\xon{\rho} = \Xp_I \sqcup \Xp_O \sqcup \Xp_U$ gives rise
to a face $\rho'$ of $\rho$ with 
$\xin{\rho'} = \xin{\rho} \cup \Xp_I$, 
$\xon{\rho'} = \Xp_O$ and 
$\xout{\rho'} = \xout{\rho} \cup \Xp_U$. 
Thus we obtain the rhomboids of all dimensions in the rhomboid tiling
by enumerating the partitions of $\xon{\rho}$ for each top-dimensional rhomboid.}

To get started, we note that the radius function needs
a representation for every rhomboid in the tiling, but the algorithm
in Section \ref{sec:algorithm} computes only the top-dimensional rhomboids.
This is easily remedied by noticing that the dimension of a rhomboid is
$k = \card{\xon{\rho}}$ and its $3^k$ faces correspond to the different
ways of partitioning $\xon{\rho}$ into three sets.
For the remainder of the discussion, assume that we have a representation
for the rhomboids of all dimensions $0 \leq j \leq d+1$ in $\Rhomboid{\Xp}$.
Each $j$-dimensional rhomboid, $\rho \in \Rhomboid{\Xp}$,
corresponds to a $(d+1-j)$-dimensional cell in the dual arrangement,
$\rho^* \in \Arr{\Xp}$.
We introduce $\parafunct{t} (\rp) \colon \Rspace^d \to \Rspace$ defined
by mapping $x \in \Rspace^d$ to
$\parafunct{t} (\rp) = \tfrac{1}{2} (\norm{\rp}^2 - t)$.
With slight abuse of notation, we write $\Paraboloidt{t}$ 
for the graph of this function.
This graph is the paraboloid $\Paraboloid_0$
dropped down vertically by a distance $\tfrac{t}{2}$.
We define the \emph{squared radius function}
$\Rfun^2 \colon \Rhomboid{\Xp} \to \Rspace$,
which maps a rhomboid to the minimum $t$ such that
$\Paraboloidt{t}$ has a non-empty intersection with $\rho^*$.
We call a sphere \emph{constrained} by $\rho$ if it encloses $\xin{\rho}$,
passes through all points of $\xon{\rho}$, and has no other points of $A$ inside.
Letting $\Smin{\rho}$ be the smallest such sphere,
we get an alternative interpretation of the radius function:
\begin{lemma} \label{lem:rhomboid_radius}
  $\Rfun^2(\rho)$ equals the squared radius of $\Smin{\rho}$.
\begin{proof}
  The proof of Theorem~1 of \cite{EdOs18} establishes a map from points
  in the $\Arr{\Xp}$ to spheres:
  a point $\rpd = (\rp, \rpcd) \in \Rspace^d \times \Rspace$
  below the paraboloid $\Paraboloid$ is mapped to the sphere, $S$,
  with center $\rp$ and squared radius $\norm{\rp}^2 - 2\rpcd$.
  Importantly, if $\rho^*$ is the cell in the dual arrangement whose interior
  contains $\rpd$, then $S$ is constrained by $\rho$,
  which is the rhomboid dual to $\rho^*$.

  Now let $t = \Rfun^2(\rho)$,
  and let $r^2$ be the squared radius of $\Smin{\rho}$.
  By definition, $t$ is the smallest value for
  which $\Paraboloidt{t}$ contains a point $\rpd \in \rho^*$.
  The aforementioned map maps $\rpd$ to a sphere constrained by $\rho$,
  thus $r^2 \leq t$.
  When reversing this map, $\Smin{\rho}$ is mapped to a point of $\rho^*$.
  As $t$ was the smallest value for which $\Paraboloidt{t}$
  touches $\rho^*$, we have $t \leq r^2$.
  Thus the squared radius of $\Smin{\rho}$ equals $\Rfun^2(\rho)$.
\end{proof}
\end{lemma}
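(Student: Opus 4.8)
The plan is to use the point-to-sphere duality established in the proof of Theorem~1 of \cite{EdOs18}, and to observe that it identifies the minimization defining $\Rfun^2(\rho)$ with the one defining the radius of $\Smin{\rho}$. The key preliminary computation is to evaluate the squared radius of the sphere assigned to a point of $\Paraboloidt{t}$. A point $\rpd = (\rp, \rpcd)$ lies on $\Paraboloidt{t}$ precisely when $\rpcd = \tfrac{1}{2}(\norm{\rp}^2 - t)$, and the duality map sends it to a sphere centered at $\rp$ of squared radius $\norm{\rp}^2 - 2\rpcd = t$. Hence lowering the dropped paraboloid until it first meets $\rho^*$ is exactly the search for the smallest squared radius among the spheres dual to points of $\rho^*$. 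I would record this identity first, as both inequalities of the lemma reduce to it.

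For the inequality $r^2 \le t$ with $t = \Rfun^2(\rho)$: by definition of $t$ there is a point $\rpd \in \Paraboloidt{t} \cap \rho^*$. Feeding $\rpd$ into the duality map produces a sphere $S$ of squared radius $t$ that, by Theorem~1 of \cite{EdOs18}, is constrained by the rhomboid dual to the cell containing $\rpd$, namely $\rho$. Since $\Smin{\rho}$ is the smallest sphere constrained by $\rho$, its squared radius obeys $r^2 \le t$.

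For the reverse inequality $t \le r^2$, I would run the correspondence backwards. The sphere $\Smin{\rho}$ is constrained by $\rho$, so it is the image of a point $\rpd' \in \rho^*$; as its squared radius is $r^2$, this preimage lies on $\Paraboloidt{r^2}$. Thus $\Paraboloidt{r^2}$ already meets $\rho^*$, and since $t = \Rfun^2(\rho)$ is by definition the smallest drop for which $\Paraboloidt{t}$ meets $\rho^*$, we get $t \le r^2$. Together the two inequalities yield $\Rfun^2(\rho) = r^2$.

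The step I expect to demand the most care is the first inequality, because Theorem~1 of \cite{EdOs18} associates spheres constrained by $\rho$ to points in the relative interior of $\rho^*$, whereas the minimizing point $\rpd$ may land on the boundary of $\rho^*$. Such a boundary point lies in the relative interior of a proper face of $\rho^*$, which is dual to a coface $\rho''$ of $\rho$, so the map only guarantees that $S$ is constrained by $\rho''$. I would close this gap by checking that a sphere constrained by a coface $\rho''$ is automatically constrained by $\rho$ in the sense of the definition: passing from $\rho$ to $\rho''$ only relocates some points of $\xin{\rho}$ and of $\xout{\rho}$ onto the sphere, so $S$ still encloses $\xin{\rho}$, passes through every point of $\xon{\rho}$, and has no point of $A$ strictly inside. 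With this observation the minimizing sphere is always eligible for comparison against $\Smin{\rho}$, and the argument goes through whether or not $\rpd$ is interior.
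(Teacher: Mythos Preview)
Your proof is correct and follows essentially the same route as the paper's: both use the point-to-sphere correspondence from \cite{EdOs18} to show each inequality, with the identity $\norm{\rp}^2 - 2\rpcd = t$ for $\rpd \in \Paraboloidt{t}$ as the link. Your final paragraph on the boundary case is in fact more careful than the paper, which silently applies the interior statement to an arbitrary point $\rpd \in \rho^*$; your observation that a sphere constrained by a coface $\rho''$ of $\rho$ is still constrained by $\rho$ is the right patch and is immediate from the face description of rhomboids.
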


To compute this radius function, we first get the smallest
sphere constrained by a rhomboid.
While Welzl's algorithm \cite{Welzl91smallestenclosing}
for smallest enclosing sphere can be adapted to this task, it takes
$O(n)$ with $n = \card{\Xp}$ for each such sphere computation.
To improve on this bound, we recall that Lemma~3 of \cite{EdOs18}
establishes that rhomboids of the same
radius value come in intervals $[\rho_{min}, \rho_{max}] := 
\{\rho \in \Rhomboid{\Xp} \mid \rho_{min} \subseteq \rho \subseteq \rho_{max}\}$
whose lower bound, $\rho_{min}$, is a vertex.
To identify the vertex $\vq$ that a rhomboid $\rho$ forms an interval
with, we need to identify its vertex with the same radius value.
By Lemma~\ref{lem:rhomboid_radius} this means the radii of $\Smin{\rho}$
and $\Smin{\vq}$ have to be the same, and it is not difficult to see
that the spheres $\Smin{\rho}$ and $\Smin{\vq}$ are in fact the same.
As $\xon{\vq} = \emptyset$ for any vertex $\vq$, the sphere achieving the radius value
of $\vq$ is defined solely by inclusions and exclusion constraints.
Therefore all constraints of $\rho$ that require points of $\xon{\rho}$
to be on the sphere need to be converted to inclusion and exclusion
constraints without affecting the resulting sphere. We know that such
constraints exist because the lower bound of the interval is a vertex.
This observation gives rise to the following lemma.

\begin{lemma}
Let $\rho$ be a rhomboid that is an upper bound of an interval.
Let $\Xp_I \subseteq \xon{\rho}$ such that the smallest enclosing sphere $S$
of $\Xp_I$ that excludes $\xon{\rho} \setminus \Xp_I$ is the same as
the circumsphere of $\xon{\rho}$.
Then $\rho$ forms an interval with the vertex $\vq = \xin{\rho} \cup \Xp_I$.
\begin{proof}
As $\rho$ is an upper bound of an interval, its sphere, $\Smin{\rho}$,
is only supported by $\xon{\rho}$.
Indeed, if there were another point 
$\xp \in \xin{\rho}$---or $\xp \in \xout{\rho}$---on the surface of this
sphere, then the rhomboid $\varrho$ with $\xon{\varrho} = \xon{\rho} \cup \{\xp\}$
and $\xin{\varrho} = \xin{\rho} \setminus \{\xp\}$---or
$\xout{\varrho} = \xout{\rho} \setminus \{\xp\}$---would be
a higher-dimensional rhomboid with the same
sphere $\Smin{\varrho} = \Smin{\rho}$,
contradicting that $\rho$ be an upper bound of an interval.

As $\Smin{\rho}$ is only supported by $\xon{\rho}$, this means that 
$\Smin{\rho}$ is the same as the circumsphere of $\xon{\rho}$,
which by our assumption is the same as $S$.
Now the inclusion and exclusion constraints of $S$ are part of the
constraint set for $\Smin{\vq}$, but because $S = \Smin{\rho}$ it
does in fact fulfill all the constraints of $\Smin{\vq}$.
Thus $\Smin{\vq} = S = \Smin{\rho}$, proving that they are in the
same interval.
\end{proof}
\end{lemma}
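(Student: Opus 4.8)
The plan is to reduce the statement to an identity between smallest constrained spheres. By the discussion preceding the lemma, two rhomboids lie in a common interval precisely when they carry the same radius value, and by Lemma~3 of \cite{EdOs18} each such interval has a unique vertex as its lower bound. Together with Lemma~\ref{lem:rhomboid_radius}, it therefore suffices to verify two things: that $\vq = \xin{\rho} \cup \Xp_I$ is a vertex, and hence a face, of $\rho$, and that $\Smin{\vq} = \Smin{\rho}$. The first is immediate from Equation~\eqref{eqn:newvx}: since $\Xp_I \subseteq \xon{\rho}$ we have $\xin{\rho} \subseteq \vq \subseteq \xin{\rho} \cup \xon{\rho}$, so $\vq$ is a combinatorial vertex of $\rho$, realized as the $0$-dimensional rhomboid with $\xin{\vq} = \vq$ and $\xon{\vq} = \emptyset$.

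Next I would pin down $\Smin{\rho}$ explicitly. Because $\rho$ is the upper bound of its interval, no point of $\xin{\rho}$ or $\xout{\rho}$ can lie on $\Smin{\rho}$: were some $\xp \in \xin{\rho}$ on the sphere, then moving $\xp$ from $\xin{\rho}$ into the on-set would yield a rhomboid $\varrho$ of one higher dimension that is still constrained by the very same sphere and has $\rho$ as a face, contradicting maximality of $\rho$ in its interval; the symmetric move for a point of $\xout{\rho}$ leads to the same contradiction. Consequently $\Smin{\rho}$ is supported by $\xon{\rho}$ alone and, passing through every point of $\xon{\rho}$, coincides with the circumsphere of $\xon{\rho}$. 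By hypothesis that circumsphere is exactly $S$, so $\Smin{\rho} = S$.

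It then remains to show $\Smin{\vq} = S$, which I would obtain by sandwiching the two smallest-sphere problems. Spelled out, $\Smin{\vq}$ is the smallest sphere enclosing $\xin{\rho} \cup \Xp_I$ while leaving every point of $(\xon{\rho} \setminus \Xp_I) \cup \xout{\rho}$ non-interior, whereas $S$ is required only to enclose $\Xp_I$ and to keep $\xon{\rho} \setminus \Xp_I$ non-interior. The constraints defining $S$ thus form a subset of those defining $\Smin{\vq}$, so the feasible spheres of the latter form a subset of the former and the radius of $\Smin{\vq}$ is at least that of $S$. Conversely, $S = \Smin{\rho}$ encloses $\xin{\rho}$ strictly, carries $\xon{\rho}$ (and in particular $\Xp_I$) on its boundary, and keeps $\xout{\rho}$ strictly outside; hence it satisfies every constraint of $\Smin{\vq}$ and is feasible for it, forcing the radius of $\Smin{\vq}$ to be at most that of $S$. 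The two radii coincide, and general position makes the smallest constrained sphere unique, so $\Smin{\vq} = S = \Smin{\rho}$. With equal radius values and $\vq$ a face of $\rho$, the two rhomboids share an interval, with $\vq$ as its lower bound.

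I expect this last step to be the main obstacle. The delicate bookkeeping is matching the two optimization problems under the convention that ``excludes'' means ``not strictly inside'': the points of $\xon{\rho} \setminus \Xp_I$ sit on the boundary of $S$ and must simultaneously read as excluded for the $S$-problem and as non-interior for the $\Smin{\vq}$-problem. One must also check that these on-boundary points do not covertly support $\Smin{\vq}$, which would again enlarge the associated rhomboid, and invoke general position to upgrade equality of radii to equality of spheres.
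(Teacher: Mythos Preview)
Your proposal is correct and follows essentially the same route as the paper: you argue that maximality of $\rho$ forces $\Smin{\rho}$ to be supported only by $\xon{\rho}$, hence equal to the circumsphere and to $S$, and then sandwich the constraint sets to conclude $\Smin{\vq} = S$. The paper's proof is identical in structure but terser in the final step; your explicit verification that $\vq$ is a face of $\rho$ and your invocation of general position for uniqueness are helpful additions the paper leaves implicit.
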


\ourparagraph{Algorithm.}
Assume $\rho$ is a $j$-rhomboid that is an upper bound of an interval.
Let $S$ be the circumsphere of $\xon{\rho}$.
For each point $\xp \in \xon{\rho}$, we need to decide
whether to impose an inclusion or exclusion constraint on it.
Let $S_\xp$ be the circumsphere of $\xon{\rho} \setminus \{\xp\}$.
If $\xp$ is outside of $S_\xp$, then imposing an exclusion constraint
for $\xp$ would yield $S_\xp$ rather than $S$, thus we add $\xp$
to $\Xp_I$ in order to impose an inclusion constraint for it.
Similarly, if $\xp$ is inside of $S_\xp$, we have to impose an exclusion
constraint for $\xp$ and thus do not add it to $\Xp_I$.

While this is difficult for an individual rhomboid, it becomes
straightforward if we compute all intervals in the rhomboid tiling.
We know that all $(d+1)$-rhomboids are upper bounds of intervals.
After marking all rhomboids that are contained in such intervals, we know
that all remaining unmarked $d$-rhomboids are upper bounds of intervals.
Thus by processing the rhomboids in decreasing dimension, all unmarked
rhomboids we encounter are upper bounds.

\section{Discussion}
\label{sec:discussion}  

This paper presents a simple algorithm for computing order-$k$ Delaunay mosaics
in Euclidean space of constant dimension.
Implementations of the algorithm---in {\tt C++} for points in $\Rspace^2$ and $\Rspace^3$
and in {\tt python} for points in $\Rspace^d$---are provided \cite{orderkgithub,rhomboidgithub}.
This software includes the application to the persistence of $k$-fold covers
described in \cite{EdOs18}.
The remainder of this section discusses this application and possible extensions
and optimizations of our algorithm.

\medskip \noindent
\emph{$k$-fold covers.}
The sublevel sets of the order-$k$ Delaunay mosaics with respect to the
radius function introduced in Section \ref{sec:alphashapes}
are homotopy equivalent to $k$-fold covers of Euclidean balls.
It follows that our algorithms facilitate the computation of persistence
of these $k$-fold covers.
Furthermore, the circumcenters of the spheres that are used in the computation of the
radius function provide the geometric locations of the
order-$k$ Voronoi vertices and allow reconstructing the order-$k$ Voronoi
tessellation via duality.

\medskip \noindent
\emph{Weighted setting.}
Our algorithm generalizes to points with real weights, but not easily.
The main challenge is the extraction of the vertices of the order-$k$ mosaic
from lower-order mosaics.
This extraction relies on Theorem \ref{thm:orderk-vertices},
which does not hold for weighted points.
Indeed, a crucial assumption in this theorem is that every lifted hyperplane
is incident to the depth-$0$ chamber of the arrangement,
and this property is generally violated for weighted points.
This is the same assumption used in the prior dimension-agnostic algorithms
\cite{agarwal1998constructing,Mul90,mulmuley1991levels}.
For sets of weighted points that satisfy this assumption,
our algorithm and these prior algorithms still work.
To overcome this limitation, we would need a way to detect all bowls
in the arrangement, because they correspond to the vertices in the
Delaunay mosaics our algorithm is not able to find.
Identifying the bowls is an independent problem,
and any solution to it can be combined with our algorithm.
Once we know the bowls and add the corresponding vertices
to the appropriate mosaics, our algorithm works as before.

\medskip \noindent
\emph{Clusters of cells.}
As mentioned in Section \ref{sec:experimental}, first-generation cells in
the order-$k$ Delaunay mosaic are organized in clusters.
To formally define them, consider the graph whose nodes are the cells
and whose arcs are the shared facets (i.e.\ the $1$-skeleton of the
order-$k$ Voronoi tessellation).
A \emph{cluster} is a connected component in the subgraph induced
by the first-generation cells.
It is not difficult to see that two such cells belong to a common cluster
if and only if the corresponding rhomboids have the same anchor vertex.
Let $\rho$ be one of these rhomboids and recall that the anchor vertex is $\Ain(\rho)$,
which in this case is a collection of $k-1$ points of $A$.
Each combinatorial vertex of any cell in the cluster contains
these $k-1$ points, plus one additional point,
which differentiates between these vertices.
In other words, the cluster as a subcomplex of the
order-$1$ Delaunay mosaic of these additional points.

With this insight, we could replace the weighted Delaunay mosaic of
the entire vertex set by multiple instances of
unweighted Delaunay mosaics, namely one per cluster.
This alternative strategy avoids the need to compute averages of points
at the cost of extra book-keeping to group the vertex set of
$\Del{k}{A}$ into clusters.
We mention that in $\Rspace^2$, the structure of each cluster satisfies
the requirements that allow for the construction in time linear in
the number of points \cite{AgSh89}.

\medskip \noindent
\emph{Exact arithmetic.}
The CGAL software library \cite{cgal:eb-19b} supports exact arithmetic
by distinguishing between \emph{exact constructions} and \emph{exact predicates}.
The latter are geometric tests with a {\tt true} or {\tt false} answer,
such as whether or not a given point lies on a given sphere.
By itself, the CGAL algorithm for weighted Delaunay triangulations
requires exact predicates but no exact constructions.
Our algorithm, on the other hand, computes averages of collections of
input points, which are the locations of the vertices of the mosaic.
This is an exact construction and indeed the only one needed to run
our algorithm with exact arithmetic.
In practice, exact constructions are a significant overhead with noticeable
impact on the runtime, which would be nice to avoid.

\newpage
\bibliography{he_algo}{}
\bibliographystyle{plain}

\end{document}